\documentclass[a4paper,11pt]{article}
\usepackage{fullpage}
\usepackage{amsmath,amsthm,amssymb,amsfonts,mathrsfs,amsthm,centernot}
\usepackage{dsfont}
\usepackage{graphicx,hyperref}
\usepackage{enumitem}
\usepackage[perpage,symbol*]{footmisc}

\bibliographystyle{plain}

%%%%%%%%%%%%%%%%%%%%%%%%%%%%%%%%%%%%%%%%%%%%%%%%%%%%%%%%%%%%%%%%%%%%%%

\def\Tr{\operatorname{Tr}}
\def\>{\rangle}
\def\<{\langle}
\def\N#1{\left|\!\left|{#1}\right|\!\right|}
\def\id{\mathsf{id}}
\def\mE{\mathcal{E}}

\def\mN{\mathcal{N}}
\def\mL{\mathcal{L}}

\def\mD{\mathcal{D}}
\def\mT{\mathcal{T}}
\def\sH{\mathcal{H}}

\def\openone{\mathds{1}}
\newcommand{\set}[1]{\mathcal{#1}}
\newcommand{\op}[1]{\mathsf{#1}}

\newcommand{\pg}{P_{\operatorname{guess}}}

\newcommand{\defeq}{\stackrel{\textup{\tiny def}}{=}}

\newcommand{\MN}[1]{\left|\!\left|\!\left|#1\right|\!\right|\!\right|}

\newcommand{\Hmin}{H_{\operatorname{min}}}

\renewcommand{\qedsymbol}{\nobreak \ifvmode \relax \else
  \ifdim \lastskip<1.5em \hskip-\lastskip \hskip1.5em plus0em
  minus0.5em \fi \nobreak \vrule height0.75em width0.5em
  depth0.25em\fi}

\renewcommand{\ge}{\geqslant}
\renewcommand{\le}{\leqslant}

\newtheorem{corollary}{Corollary}
\newtheorem{lemma}{Lemma}
\newtheorem{proposition}{Proposition}

\newtheorem{definition}{Definition}

\theoremstyle{remark}
\newtheorem{remark}{Remark}

\theoremstyle{definition}

\newcommand{\bea}{\begin{eqnarray}}
\newcommand{\eea}{\end{eqnarray}}
\newcommand{\be}{\begin{equation}}
\newcommand{\ee}{\end{equation}}

%%%%%%%%%%%%%%%%%%%%%%%%%%%%%%%%%%%%%%%%%%%
\begin{document}

%\SetWatermarkText{NOT FOR DISTRIBUTION}
%\SetWatermarkAngle{60}
%\SetWatermarkScale{0.5}

\title{Degradable channels, less noisy channels,\\and quantum statistical morphisms: an equivalence relation}

\author{{\small \sc Francesco Buscemi}\footnote{buscemi@is.nagoya-u.ac.jp}\\
    \footnotesize{\em Dept. of Computer Science and Mathematical Informatics}\\
    \footnotesize{\em Nagoya University}}

\date{\today}

\maketitle

\begin{abstract}
	Two partial orderings among communication channels, namely, `being degradable into' and `being less noisy than,' are reconsidered in the light of recent results about statistical comparisons of quantum channels. Though our analysis covers at once both classical and quantum channels, we also provide a separate treatment of classical noisy channels, and show how, in this case, an alternative self-contained proof can be constructed, with its own particular merits with respect to the general result.
\end{abstract}

%%%%%%%%%%%%%%%%%%%%%%%%%%%%%%%%%%%%%%%%%%%

\section{Introduction}

Given two channels, it is natural to ask which one is `better.' As one soon realizes, ordering channels according to their capacity would be too limited in scope, as there may be other measures of `goodness' that are more relevant than capacity for the task at hand. This is even more so for \textit{quantum} channels, for which many inequivalent capacities exist~\cite{wilde}. Indeed, it does not require too much imagination to come up with uncountably many such `comparisons,' and each comparison will only induce a \textit{partial}, rathen than a \textit{total}, ordering between channels. This fact should not come as a worrying surprise: channels are highly dimensional objects and any total ordering can only be extremely coarse---often too coarse to be of any use in practice. Nonetheless, it is true that some comparisons are more natural, more compelling, or just mathematically simpler than others, so that they received more attention in the literature. This paper actually deals with two much studied comparisons, namely, the partial orderings `being degradable into' and `being less noisy than,' introduced in Definitions~\ref{def:degradable} and~\ref{def:less-noisy} below (for a compendium of many comparisons among discrete noisy channels see Ref.~\cite{cohen_comparisons_1998}).

The goal of this work is to exhibit a connection between degradable channels and less noisy channels, beyond the obvious one `degradable implies less noisy.' More explicitly, we show how a formal (but not substantial) modification in the definition of less noisy channels is sufficient to make the two orderings equivalent. Our result is proved in a general scenario, where channels are modeled as CPTP maps between operator algebras, thus covering quantum channels, classical channels (when input and output are commutative algebras), but also hybrid classical-to-quantum and quantum-to-classical channels.

Central to our approach is the notion of \textit{quantum statistical morphisms}, namely, linear maps between operator algebras that generalize in a statistical sense the idea of `post-processing' or `coarse-graining' (see Definition~\ref{def:morphisms}). The use of statistical morphisms allows us to prove our results under very mild assumptions, so that the quantum and classical cases are recovered as special cases of a single unifying framework. Such an top-down approach, besides being mathematically simpler, it has the merit to clearly separate statistics from physics: indeed, it can be immediately applied to general probabilistic theories, as it does not rely on any particular feature of quantum theory like, for example, complete positivity.

The paper is organized as follows. In Section~\ref{sec:notation} we introduce the notation and some basic definitions. In Section~\ref{sec:general} we introduce the concept of statistical morphisms and prove the fundamental equivalence relation. In Section~\ref{sec:classical} we specialize to the case of semiclassical channels. The case of discrete noisy classical channels is treated separately in Subsection~\ref{sec:fully-class}, in a way that does not rely on any knowledge of the quantum case and that allows the treatment of the approximate case, studied in Appendix~\ref{sec:app-approximate}. In Section~\ref{sec:quantum} we consider the case of fully quantum channels. Section~\ref{sec:concls} concludes the paper with some comments and possible future developments.

\section{Notations and definitions}\label{sec:notation}

In what follows, all sets are finite and Hilbert spaces are finite-dimensional.
\begin{itemize}
	\item Sets are denoted by $\set{X}=\{x:x\in\set{X}\}$, $\set{Y}=\{y:y\in\set{Y}\}$, etc.
	\item A probability distribution over $\set{X}$ is a function $p:\set{X}\to[0,1]$ such that $\sum_xp(x)=1$.
	\item The \textbf{set of all probability distributions} over $\set{X}$ is denoted by $\op{P}(\set{X})$.
	\item Random variables are labeled by upper case letters $X$, $Y$, etc, with ranges $\set{X}=\{x\}$, $\set{Y}=\{y\}$, etc.
	\item Discrete noisy channels are identified with the associated conditional probability distributions.
	\item Quantum systems are labeled by upper case letters $Q$, $R$, etc, and the associated Hilbert spaces are denoted by $\sH_Q$, $\sH_R$, etc. Dimensions are denoted as $d_Q\defeq\dim\sH_Q$ etc.
	\item The \textbf{set of linear operators} acting on a Hilbert space $\sH$ is denoted by $\op{L}(\sH)$. The \textbf{identity operator} is denoted by $\openone$.
	\item States of $Q$ are represented by \textbf{density operators}, i.e., operators $\rho\in\op{L}(\sH)$ such that $\rho\ge0$ and $\Tr[\rho]=1$.
	\item The \textbf{set of density operators} acting on a Hilbert space $\sH$ is denoted by $\op{S}(\sH)$.
%	\item An ensemble $\mE$ is given by a probability distribution $p\in\op{P}(\set{X})$ and a family of density operators $\{\rho^x:x\in\set{X}\}$. It models the situation in which the uncertainty about the state of a quantum system is purely classical, i.e., the system is known to be in one of the states $\rho^x$ with probability $p(x)$.
	\item A positive-operator valued measure (\textbf{POVM}) is a function $P:\set{X}\to\op{L}(\sH)$ such that $P(x)\ge0$ and $\sum_xP(x)=\openone$. For the sake of readability, we will often write the argument $x$ as a superscript, i.e., $P^x$ rather than $P(x)$.
	\item The \textbf{set of POVMs} from $\set{X}$ to $\op{L}(\sH)$ is denoted by $\op{M}(\set{X},\sH)$.
	\item \textbf{Quantum channels} are completely positive trace-preserving (CPTP) linear maps $\mN:\op{L}(\sH_Q)\to\op{L}(\sH_{R})$. The \textbf{range} of a channel $\mN$ is defined as the image of $\op{L}(\sH_Q)$ under the action of $\mN$, namely, the set $\{\mN(X):X\in\op{L}(\sH_Q) \}$. The \textbf{identity map} is denoted by $\id$.
	\item The \textbf{set of quantum channels} from $\op{L}(\sH_Q)$ to $\op{L}(\sH_R)$ is denoted by $\op{C}(\sH_Q,\sH_R)$.
	\item Given a linear map $\mL:\op{L}(\sH_Q)\to\op{L}(\sH_R)$, its \textbf{trace dual} is the linear map $\mL^*:\op{L}(\sH_R)\to\op{L}(\sH_Q)$ defined by the relation:
	\[
	\Tr[\mL^*(X_R)\ Y_Q]\defeq\Tr[X_R\ \mL(Y_Q)],
	\]
	for all $Y_Q\in\op{L}(\sH_Q)$ and all $X_R\in\op{L}(\sH_R)$. Then, $\mN$ is trace-preserving if and only if $\mN^*$ is \textbf{unit-preserving}, i.e., $\mN^*(\openone_R)=\openone_Q$.  Moreover, we say that a linear map $\mL$ is \textbf{Hermitian} if and only if, for any $X=X^\dag$, $\mL(X)=\mL(X)^\dag$.
	\item A \textbf{classical-to-quantum (cq) channel} is a function $\mE:\set{X}\to\op{S}(\sH)$. We will usually denote the density operators $\mE(x)$ by $\rho^x$, $\sigma^x$, etc. Equivalently, a cq-channel $\mE$ will be denoted as a family of density operators $\mE=\{\rho^x:x\in\set{X} \}$.
	\item A \textbf{classical-quantum (cq) state} is a bipartite density operator describing a quantum system $Q$ correlated with a random variable $X$. Since random variables can be seen as commuting density operators, we will represent cq-states as, e.g., $\rho_{XQ}=\sum_{x\in\set{X}}p(x)|x\>\<x|_X\otimes\rho^x_Q$, where the unit vectors $\{|x\>:x\in\set{X} \}$ are all orthogonal. 
	\item For a given bipartite density operator $\rho_{RQ}\in\op{S}(\sH_R\otimes\sH_Q)$, its \textbf{conditional min-entropy} $\Hmin(R|Q)_\rho$ is defined as
	\[
	\Hmin(R|Q)_\rho\defeq -\inf_{\sigma_Q\in\op{S}(\sH_Q)}\inf\{\lambda\in\mathbb{R}:\rho_{RQ}\le2^\lambda \openone_R\otimes\sigma_Q  \}.
	\]
	We will use in particular the fact that~\cite{konig2009operational}
	\[
	2^{-\Hmin(R|Q)_\rho}=d_R\max_{\mN\in\op{C}(\sH_Q,\sH_{R'})}F^2((\id_R\otimes\mN)\rho_{RQ},\Phi^+_{RR'}),
	\]
	where $\sH_{R'}\cong\sH_R$, $F^2(\rho,\sigma)\defeq \N{\sqrt{\rho}\sqrt{\sigma}}_1^2$, and $\Phi^+_{RR'}\defeq d_R^{-1}\sum_{i,j=1}^{d_R}|i_R\>|i_{R'}\>\<j_R|\<j_{R'}|$, for some orthonormal basis $\{|i\> \}$ of $\sH_R$.
	
	In the case of a cq-state $\rho_{XQ}=\sum_{x\in\set{X}}p(x)|x\>\<x|_X\otimes\rho^x_Q$, the above formula becomes equivalent to
	\[
	2^{-\Hmin(X|Q)_\rho}=\max_{P\in\op{M}(\set{X},\sH_Q)}\sum_{x\in\set{X}}p(x)\Tr[\rho^x_Q\ P^x_Q]\defeq \pg(X|Q)_\rho,
	\]
	namely, the expected \textbf{guessing probability}, namely, the probability of correctly guessing the value of $X$ having access only to the quantum system $Q$. 
\end{itemize}

\subsection{`Degradable'  and `less noisy' channels}

In the classical case, the following definitions can be found in Refs.~\cite{bergmans,Korner1977,ElGamal1977,csiszar-korner}.

\begin{definition}[Degradable channels]\label{def:degradable}
	Given two discrete channels $p(y|x)$ and $p'(z|x)$,
	$p$ is said to be {\em degradable} into $p'$ whenever there exists a discrete channel $q(z|y)$ such that
	\[
	p'(z|x)=\sum_{y\in\mathcal{Y}}q(z|y)p(y|x).
	\]
\end{definition}

\begin{definition}[Less noisy channels]\label{def:less-noisy}
	Given two discrete channels $p(y|x)$ and $p'(z|x)$,
	$p$ is said to be {\em less noisy} than $p'$ whenever, for any discrete random variable $U$, any probability distribution $q(u)$, and any channel $q(x|u)$, the joint input-output probability distributions $q(u)q(x|u)p(y|x)$ and $q(u)q(x|u)p'(z|x)$ satisfy
	\[
	H(U|Y)\le H(U|Z).
	\]
\end{definition}

If $p_1$ is degradable into $p_2$, then $p_1$ is less noisy than $p_2$: the proof is a simple consequence of the data-processing inequality. Counterexamples are known for the converse~\cite{Korner1977}, namely, one channel can be less noisy than another without being degradable. A consequence of the results presented here is that it is sufficient to replace $H$ with $\Hmin$ in  Definition~\ref{def:less-noisy} in order to make the ordering `less noisy' (now defined with respect to $\Hmin$) equivalent to the ordering `degradable.' This fact is formalized in Corollary~\ref{coro:classical} below. (The reader interested only in the classical case can directly skip to Subsection~\ref{sec:fully-class}: there,  Corollary~\ref{coro:classical} is provided an independent, self-contained proof, which does not rely on any idea developed for the general non-commutative case. Moreover, such a proof allows the treatment of the approximate case, which is studied in Appendix~\ref{sec:app-approximate}.)

In fact, our analysis will not be limited to the case of classical noisy channels, but will include some results valid for quantum channels too. We hence generalize Definitions~\ref{def:degradable} and~\ref{def:less-noisy} to the quantum case as follows (but compare with~\cite{watanabe}):

\begin{definition}[Degradable quantum channels]\label{def:degradable-quantum}
	Given two CPTP maps $\mN:\op{L}(\sH_Q)\to\op{L}(\sH_R)$ and $\mN':\op{L}(\sH_Q)\to\op{L}(\sH_S)$, $\mN$ is said to be {\em degradable} into $\mN'$ whenever there exists a CPTP map $\mT:\op{L}(\sH_R)\to\op{L}(\sH_S)$ such that
	\[
	\mN'=\mT\circ\mN.
	\] 
\end{definition}

\begin{definition}[Less noisy quantum channels]\label{def:less-noisy-quantum}
	Given two CPTP maps $\mN:\op{L}(\sH_Q)\to\op{L}(\sH_R)$ and $\mN':\op{L}(\sH_Q)\to\op{L}(\sH_S)$, $\mN$ is said to be {\em less noisy} than $\mN'$ whenever, for any discrete random variable $U$, any probability distribution $q(u)$, and any cq-channel $\mE=\{\rho^u_Q:u\in\mathcal{U} \}$, the corresponding input-output cq-states \[\sigma_{UR}\defeq \sum_uq(u)|u\>\<u|_U\otimes\mN_Q(\rho^u_Q)\qquad \textrm{and} \qquad \tau_{US}\defeq \sum_uq(u)|u\>\<u|_U\otimes\mN'_Q(\rho^u_{Q})\] satisfy
	\[
	H(U|R)_\sigma\le H(U|S)_\tau.
	\] 
\end{definition}

This paper studies the relations between the notions of degradable channels and less noisy channels, both in classical and quantum information theory. In what follows we will show, in particular, how Definitions~\ref{def:less-noisy} and~\ref{def:less-noisy-quantum} can be formally modified so that the two partial orderings become equivalent. The results presented here are based on recent formulations of the Blackwell-Sherman-Stein Theorem~\cite{blackwell_equivalent_1953,torgersen_comparison_1991,liese_statistical_2008} for quantum systems~\cite{buscemi2005clean,shmaya,chefles,buscemi2012comparison,buscemi_game-theoretic_2014,buscemi_equivalence_2014}.

\section{Statistical morphisms and a fundamental equivalence relation}\label{sec:general}

We begin with a definition, generalizing that given in~\cite{buscemi2012comparison}:

\begin{definition}[Quantum statistical morphisms]\label{def:morphisms}
	Given a CPTP map $\mN:\op{L}(\sH_Q)\to\op{L}(\sH_R)$, a \emph{statistical morphism} of $\mN$ is a linear map $\mL:\op{L}(\sH_R)\to\op{L}(\sH_S)$ such that, for any finite outcome set $\set{X}$ and any POVM $\{\bar P^x_S:x\in\set{X} \}$, there exists another POVM $\{P^x_R:x\in\set{X} \}$ such that
\begin{equation}\label{eq:stat-cond-morph}
	\Tr[(\mL\circ\mN)(\rho_Q)\ \bar P^x_S]=\Tr[\mN(\rho_Q)\ P^x_R],\qquad\forall x\in\set{X}, \forall\rho_Q\in\op{S}(\sH_Q).
\end{equation}
\end{definition}

\begin{remark}\label{rem:morphisms}
	Clearly, a positive trace-preserving linear map is always a well-defined statistical morphisms, \textit{for any} channel. However, a map can be a statistical morphism \textit{of some channel} without being positive and trace-preserving---in fact, statistical morphisms cannot even \textit{be extended}, in general, to positive trace-preserving maps, as shown in Ref.~\cite{matsumoto_example_2014} by means of an explicit counterexample~\footnote{About this problem see also Ref.~\cite{reeb-extensions}.}. We can only say that, if $\mL$ is a statistical morphism of $\mN$, then $\mL$ is positive and trace-preserving on the range of $\mN$, namely, $\Tr[(\mL\circ\mN)(X_Q)]=\Tr[\mN(X_Q)]$, for all $X_Q\in\op{L}(\sH_Q)$, and, whenever $\mN(X_Q)\ge 0$, $\Tr[(\mL\circ\mN)(X_Q)\ \bar P_S]\ge 0$, for all $\bar P_S\ge 0$. The question then arises: is any linear map $\mL$, which is positive and trace-preserving on the range of a channel $\mN$, a statistical morphism of $\mN$? Again, the answer is no. This is because, in order to guarantee that $\mL$ is positive and trace-preserving on the range of $\mN$, it would be enough to have Eq.~(\ref{eq:stat-cond-morph}) to hold only for binary POVMs (i.e., \textit{effects}) $\{\bar P,\openone-\bar P \}$, but this condition is known to be strictly weaker than that required in Definition~\ref{def:morphisms}, which must hold for any finite $\set{X}$~\cite{jencova_comparison_2012}. The situation can thus be summarized as follows:
	\begin{equation*}
	\textrm{PTP everywhere}\underset{\centernot\impliedby}{\implies}\textrm{stat. morph. of $\mN$}\underset{\centernot\impliedby}{\implies}\textrm{PTP on $\operatorname{range}(\mN)$}.
	\end{equation*}
\end{remark}

\begin{remark}
	In what follows, when we say ``trace-preserving statistical morphism'' we mean a trace-preserving (everywhere) linear map that is, in particular, a statistical morphism (for some channel).
\end{remark}

We are now ready to state a fundamental equivalence relation.

\begin{proposition}\label{prop:morphism}
	Given two CPTP maps $\mN:\op{L}(\sH_Q)\to\op{L}(\sH_R)$ and $\mN':\op{L}(\sH_Q)\to\op{L}(\sH_S)$, the following are equivalent:
	\begin{enumerate}[label=\roman*)]
		\item for any discrete random variable $U$, any probability distribution $q(u)$, and any cq-channel $\mE=\{\rho^u_Q:u\in\set{U} \}$, the corresponding input-output cq-states \[\sigma_{UR}=\sum_{u\in\set{U}}q(u)|u\>\<u|_U\otimes\mN(\rho^u_Q)\qquad \textrm{and} \qquad \tau_{US}=\sum_{u\in\set{U}}q(u)|u\>\<u|_U\otimes\mN'(\rho^u_Q)\] satisfy
		\[
	    \Hmin(U|R)_\sigma\le		\Hmin(U|S)_\tau,
		\]
		namely, $\pg(U|R)_\sigma\ge		\pg(U|S)_\tau$;
		\item there exists a Hermitian trace-preserving statistical morphism $\mL:\op{L}(\sH_R)\to\op{L}(\sH_S)$ of $\mN$ such that
		\[
		\mN'=\mL\circ\mN.
		\]
	\end{enumerate}
\end{proposition}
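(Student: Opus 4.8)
The plan is to prove the two implications separately. The direction (ii)$\Rightarrow$(i) is immediate, and the work is all in (i)$\Rightarrow$(ii), which rests on a separating-hyperplane argument of Blackwell--Sherman--Stein type.

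For (ii)$\Rightarrow$(i), suppose $\mN'=\mL\circ\mN$ with $\mL$ a statistical morphism of $\mN$. Fix any $(U,q,\mE)$ and let $\{\bar P^u_S\}$ be a POVM on $S$ attaining $\pg(U|S)_\tau$. By Definition~\ref{def:morphisms} there is a POVM $\{P^u_R\}$ on $R$ with $\Tr[(\mL\circ\mN)(\rho_Q)\,\bar P^u_S]=\Tr[\mN(\rho_Q)\,P^u_R]$ for all $\rho_Q$. Summing against $q(u)$ and using $\mN'=\mL\circ\mN$ gives $\pg(U|S)_\tau=\sum_u q(u)\Tr[\mN(\rho^u_Q)\,P^u_R]\le\pg(U|R)_\sigma$, since the middle expression is just one admissible strategy on $R$. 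This is exactly $\Hmin(U|R)_\sigma\le\Hmin(U|S)_\tau$.

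For (i)$\Rightarrow$(ii), I would first reduce to a condition on $\mN,\mN'$ alone. Using the trace dual, for a Hermitian trace-preserving $\mL$ with $\mN'=\mL\circ\mN$ one has $\Tr[\mN'(\rho_Q)\,\bar P^x_S]=\Tr[\rho_Q\,(\mN')^*(\bar P^x_S)]$ and $\Tr[\mN(\rho_Q)\,P^x_R]=\Tr[\rho_Q\,\mN^*(P^x_R)]$, so $\mL$ is a statistical morphism of $\mN$ if and only if: for every POVM $\{\bar P^x_S\}$ on $S$ there is a POVM $\{P^x_R\}$ on $R$ with $\mN^*(P^x_R)=(\mN')^*(\bar P^x_S)$ for all $x$ (call this condition~$(\ast)$). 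Note that $(\ast)$ no longer mentions $\mL$. Applied to binary POVMs, $(\ast)$ forces $\operatorname{range}((\mN')^*)\subseteq\operatorname{range}(\mN^*)$, i.e.\ $\ker\mN\subseteq\ker\mN'$, which is exactly what makes $\mL_0(\mN(X))\defeq\mN'(X)$ a well-defined linear map on $\operatorname{range}(\mN)$; extending it in a Hermitian and trace-preserving way to all of $\op{L}(\sH_R)$ is then a routine linear-algebra step (consistent because both $\mN$ and $\mN'$ are trace preserving). Hence everything reduces to deriving $(\ast)$ from (i).

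To prove $(\ast)$, fix a POVM $\{\bar P^x_S\}_{x\in\set{X}}$ and consider the compact convex set $\set{K}\defeq\{(\mN^*(P^x_R))_x:\{P^x_R\}\ \text{a POVM on}\ R\}$ sitting in the real space of $\set{X}$-tuples of Hermitian operators on $\sH_Q$. If the target tuple $((\mN')^*(\bar P^x_S))_x\notin\set{K}$, strict separation yields Hermitian $(W^x_Q)_x$ with
\[
\sum_x\Tr[\mN'(W^x_Q)\,\bar P^x_S]>\sup_{\{P^x_R\}}\sum_x\Tr[\mN(W^x_Q)\,P^x_R].
\]
Because both $\mN$ and $\mN'$ are trace preserving, replacing $W^x_Q\mapsto W^x_Q+t\openone_Q$ shifts each side by the same amount $t\,d_Q$ (the shift on the left is $t\Tr[\mN'(\openone_Q)]=t\,d_Q$, and likewise on the right it is $t\Tr[\mN(\openone_Q)]=t\,d_Q$, independently of the POVM), so the strict inequality survives; taking $t$ large makes every $W^x_Q\ge0$. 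Normalizing by $N\defeq\sum_x\Tr[W^x_Q+t\openone_Q]>0$ turns $\{W^x_Q+t\openone_Q\}$ into a genuine ensemble $q(x)\rho^x_Q$, i.e.\ a cq-state $\omega_{XQ}$. The two sides then become, respectively, the success probability of $\{\bar P^x_S\}$ on $\tau_{XS}=(\id\otimes\mN')\omega_{XQ}$, bounded above by $\pg(X|S)_\tau$, and exactly $\pg(X|R)_\sigma$ for $\sigma_{XR}=(\id\otimes\mN)\omega_{XQ}$. Hence $\pg(X|S)_\tau>\pg(X|R)_\sigma$, contradicting~(i). Therefore the target lies in $\set{K}$, establishing $(\ast)$, and feeding $(\ast)$ back through the previous paragraph completes the construction of $\mL$ and the verification that it is a statistical morphism.

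The only delicate point is the separation step: recognizing the POVM-optimum as precisely the guessing probability, and, above all, converting the separating functional into an admissible prior-and-ensemble. The device that makes this work is the common shift $W^x_Q\mapsto W^x_Q+t\openone_Q$, legitimate exactly because $\mN$ and $\mN'$ are both trace preserving, so the two competing optima move in lockstep and the strict separation—hence the contradiction with~(i)—is preserved under normalization. Everything else (closedness and convexity of $\set{K}$, the kernel inclusion, and the Hermitian trace-preserving extension) is routine finite-dimensional convexity and linear algebra.
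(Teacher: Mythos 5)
Your proof is correct, and while its overall skeleton matches the paper's (the easy direction (ii)$\Rightarrow$(i) is argued identically; the hard direction is reduced to the simulation condition you call $(\ast)$, which is exactly the paper's Eq.~(\ref{eq:stat-eqiv}), and then a map $\mL$ is built from it), it differs from the paper in two substantive ways. First, the paper does not prove the step (i)$\Rightarrow$ Eq.~(\ref{eq:stat-eqiv}) at all: it cites Refs.~\cite{buscemi2012comparison,buscemi_game-theoretic_2014,buscemi_equivalence_2014} for it. You prove it from scratch by strict separation of the target tuple from the compact convex set $\set{K}$, with the shift $W^x_Q\mapsto W^x_Q+t\openone_Q$ (legitimate precisely because both channels are trace preserving) converting the separating functional into a genuine prior-and-ensemble; this argument is sound, is essentially the noncommutative counterpart of the minimax argument the paper uses in its self-contained classical proof of Corollary~\ref{coro:classical}, and buys you a fully self-contained proposition as well as robustness toward approximate versions in the spirit of Appendix~\ref{sec:app-approximate}. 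Second, the construction of $\mL$ from $(\ast)$ is different: the paper fixes an informationally complete POVM $\{\bar P^x_S\}$, defines the trace dual on that basis by $\mL^*(\bar P^x_S)\defeq P^x_R$, and verifies unitality, Hermiticity (via a dual frame $\{\Theta^x_S\}$), and $\mN'=\mL\circ\mN$; you instead extract from binary instances of $(\ast)$ the inclusion $\ker\mN\subseteq\ker\mN'$, define $\mL$ on $\operatorname{range}(\mN)$ by $\mL(\mN(X))\defeq\mN'(X)$, and extend. Both are valid; the paper's version yields $\mL$ explicitly in terms of the simulating POVM, while yours is more economical but leaves the extension step implicit --- it is indeed routine, since $\operatorname{range}(\mN)$ is a self-adjoint subspace on which your map is Hermitian and trace preserving, so one may pick a self-adjoint complement $\set{W}$ and send $w\mapsto\Tr[w]\,\sigma_S$ for any fixed density operator $\sigma_S$, but stating this explicitly would make the argument airtight.
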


Notice that point~(i) in Proposition~\ref{prop:morphism} looks exactly as the definition of less noisy channels (Def.~\ref{def:less-noisy-quantum}), the only difference being the use of $\Hmin$ in the place of $H$.

\begin{proof}	
	If point~(ii) holds, then,
	\begin{equation*}
	\begin{split}
	\pg(U|S)_\tau&=\max_{\bar P\in\op{M}(\set{U},\sH_S)}\sum_{u\in\set{U}}q(u)\Tr[\mN'(\rho^u_Q)\ \bar P^u_S]\\
	&=\max_{\bar P\in\op{M}(\set{U},\sH_S)}\sum_{u\in\set{U}}p(u)\Tr[(\mL\circ\mN)(\rho^u_Q)\ \bar P^u_S]\\
	&\le\max_{P\in\op{M}(\set{U},\sH_R)}\sum_{u\in\set{U}}p(u)\Tr[\mN(\rho^u_Q)\ P^u_R]\\
	&=\pg(U|R)_\sigma,
	\end{split}
	\end{equation*}
	where the inequality is a consequence of Definition~\ref{def:morphisms} above.
	
	Conversely, suppose that point~(i) holds. As already shown in Refs.~\cite{buscemi2012comparison,buscemi_game-theoretic_2014,buscemi_equivalence_2014}, this implies that, for any POVM $\{\bar P^x_S:x\in\set{X} \}$ on $\sH_S$, there exists a POVM $\{P^x_R:x\in\set{X} \}$ on $\sH_R$ such that
	\begin{equation}\label{eq:stat-eqiv}
	\Tr[\mN'(\rho_Q)\ \bar P^x_S]=\Tr[\mN(\rho_Q)\ P^x_R],
	\end{equation}
	for all $x\in\set{X}$ and all $\rho_Q\in\op{S}(\sH_Q)$. Let us choose $\{\bar P^x_S:x\in\set{X} \}\in\op{M}(\set{X},\sH_S)$ to be an informationally complete POVM, namely, such that $\operatorname{span}\{\bar P^x_S:x\in\set{X}\}=\op{L}(\sH_S)$. Let $\{P^x_R:x\in\set{X} \}\in\op{M}(\set{X},\sH_R)$ be the corresponding POVM, satisfying Eq.(\ref{eq:stat-eqiv}), and define a linear map $\mL^*:\op{L}(\sH_S)\to\op{L}(\sH_R)$ by
	\[
	\mL^*(\bar P^x_S)\defeq P^x_R,\qquad x\in\set{X}.
	\]
	Such a map is uniquely defined, since $\{\bar P^x_S:x\in\set{X}\}$ is a basis for $\op{L}(\sH_S)$, and it is unit-preserving by construction, implying that its trace dual $\mL:\op{L}(\sH_R)\to\op{L}(\sH_S)$ is trace-preserving. In order to prove that $\mL$ is also Hermitian, let $\{\Theta^x_S:x\in\set{X} \}$ be the set of Hermitian operators in $\op{L}(\sH_S)$ such that
	\[
	X_S=\sum_{x\in\set{X}}\Tr[X_S\ \bar P^x_S]\ \Theta^x_S,
	\]
	for all $X_S\in\op{L}(\sH_S)$. This implies that, for any $Y_R=Y_R^\dag$ in $\op{L}(\sH_R)$,
	\[
	\begin{split}
	\mL(Y_R)=&\sum_{x\in\set{X}}\Tr[\mL(Y_R)\ \bar P^x_S]\Theta^x_S\\
	=&\sum_{x\in\set{X}}\Tr[Y_R\ \mL^*(\bar P^x_S)]\Theta^x_S\\
	=&\sum_{x\in\set{X}}\Tr[Y_R\ P^x_R]\Theta^x_S\\
	=&\sum_{x\in\set{X}}\lambda_x\Theta^x_S\\
	\end{split}
	\]
	with $\lambda_x\in\mathbb{R}$, i.e., $\mL(Y_R)$ is Hermitian too. Hence, $\mL$, as defined above, is a Hermitian trace-preserving linear map. We only need to show that $\mN'=\mL\circ\mN$ and that $\mL$ is well-defined statistical morphism of $\mN$.
	
	In order to show that $\mN'=\mL\circ\mN$, we notice that the condition expressed in Eq.~(\ref{eq:stat-eqiv}) can be reformulated as follows: for any $\rho_Q\in\op{S}(\sH_Q)$ and any $x\in\set{X}$,
	\begin{equation*}
	\begin{split}
	\Tr[\mN'(\rho_Q)\ \bar P^x_S]&=\Tr[\mN(\rho_Q)\ P^x_R]\\
	&=\Tr[\mN(\rho_Q)\ \mL^*(\bar P^x_S)]\\
	&=\Tr[(\mL\circ\mN)(\rho_Q)\ \bar P^x_S].
	\end{split}
	\end{equation*}
	Since $\{\bar P^x_S:x\in\set{X}\}$ in the above equation is informationally complete, we have that $\mN'(\rho_Q)=(\mL\circ\mN)(\rho_Q)$, for all $\rho_Q$, namely, $\mN'=\mL\circ\mN$. Thus we also know that the condition expressed in Eq.~(\ref{eq:stat-eqiv}) above automatically implies that $\mL$ is a well-defined statistical morphism.
\end{proof}

In other words, Proposition~\ref{prop:morphism} above states that, to replace $H$ with $\Hmin$ in Definition~\ref{def:less-noisy-quantum} is sufficient to conclude a \textit{weaker form of degradability}, in the sense that the degrading map is not a quantum channel, but a Hermitian trace-preserving statistical morphism. In what follows, we will see when one can conclude that the degrading map is in fact CPTP.

Before proceeding, however, we specialize Proposition~\ref{prop:morphism} to the case of cq-channels, which can always be seen as CPTP maps on commuting input subalgebras. We start by simplifying the definition of statistical morphism as follows (this was the original definition given in~\cite{buscemi2012comparison}):

\begin{definition}
	Given a cq-channel $\mE:\set{X}\to\op{S}(\sH_R)$ with $\mE=\{\sigma_R^x:x\in\set
		X \}$, a \emph{statistical morphism} of $\mE$ is a linear map $\mL:\op{L}(\sH_R)\to\op{L}(\sH_S)$ such that, for any $\set{Y}$ and any POVM $\{\bar P^y_S:y\in\set{Y} \}$, there exists a corresponding POVM $\{P^y_R:y\in\set{Y} \}$ such that $\Tr[\mL(\sigma^x_R)\ \bar P^y_S]=\Tr[\sigma^x_R\ P^y_R]$.
\end{definition}

\begin{remark}
Notice that, in order for $\mL$ to be a well-defined statistical morphism of $\mE$, it is not enough that $\mL(\sigma^x_R)\in\op{S}(\sH_S)$ for all $x\in\set{X}$. In particular, such a map is not, in general, positive on the whole $\operatorname{span}\{\sigma^x_R:x\in\set{X} \}$. See also Remark~\ref{rem:morphisms} for more details.
\end{remark}

\begin{proposition}\label{thm:morphism-cq}
	Given two cq-channels $\mE:\set{X}\to\op{S}(\sH_R)$ and $\mE':\set{X}\to\op{S}(\sH_S)$, with $\mE=\{\sigma_R^x:x\in\set
	X \}$ and $\mE'=\{\tau_S^x:x\in\set
	X \}$, the following are equivalent:
	\begin{enumerate}[label=\roman*)]
		\item for any discrete random variable $U$, any probability distribution $q(u)$, and any classical channel $q(x|u)$, the corresponding input-output cq-states \[\sigma_{UR}=\sum_{x\in\set{X}}\sum_{u\in\set{U}}q(u)q(x|u)|u\>\<u|_U\otimes\sigma^x_R\qquad \textrm{and} \qquad \tau_{US}=\sum_{x\in\set{X}}\sum_{u\in\set{U}}q(u)q(x|u)|u\>\<u|_U\otimes\tau^x_S\] satisfy
		\[
		\Hmin(U|R)_{\sigma}\le \Hmin(U|S)_{\tau},
		\]
		namely, $\pg(U|R)_{\sigma}\ge \pg(U|S)_{\tau}$;
		\item there exists a Hermitian trace-preserving statistical morphism of $\mE$, denoted by $\mL:\op{L}(\sH_R)\to\op{L}(\sH_S)$, such that
		\[
		\tau^x_S=\mL(\sigma^x_R),\qquad x\in\set{X}.
		\]
	\end{enumerate}
\end{proposition}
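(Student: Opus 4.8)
The plan is to deduce Proposition~\ref{thm:morphism-cq} from the already-established general Proposition~\ref{prop:morphism} by realizing each cq-channel as a CPTP map acting on a commutative (diagonal) input algebra. Concretely, let $\sH_X$ be a Hilbert space with a fixed orthonormal basis $\{|x\>:x\in\set{X}\}$, and define the two ``measure-and-prepare'' channels $\mN:\op{L}(\sH_X)\to\op{L}(\sH_R)$ and $\mN':\op{L}(\sH_X)\to\op{L}(\sH_S)$ by
\[
\mN(\rho_X)\defeq\sum_{x\in\set{X}}\<x|\rho_X|x\>\,\sigma^x_R,\qquad \mN'(\rho_X)\defeq\sum_{x\in\set{X}}\<x|\rho_X|x\>\,\tau^x_S .
\]
Both maps are manifestly CPTP, and they encode the given cq-channels in the sense that $\mN(|x\>\<x|)=\sigma^x_R$ and $\mN'(|x\>\<x|)=\tau^x_S$. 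The key structural feature I will exploit throughout is that $\mN$ and $\mN'$ depend on their argument only through its diagonal entries $\<x|\rho_X|x\>$.

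First I would show that the two versions of point~(i) literally coincide. Given any random variable $U$, distribution $q(u)$, and input cq-channel $\{\rho^u_X:u\in\set{U}\}\subseteq\op{S}(\sH_X)$ as allowed in Proposition~\ref{prop:morphism}, set $q(x|u)\defeq\<x|\rho^u_X|x\>$; this is a legitimate classical channel. Since $\mN$ ignores off-diagonal terms, the resulting cq-state $\sum_uq(u)|u\>\<u|_U\otimes\mN(\rho^u_X)$ equals $\sum_{x,u}q(u)q(x|u)|u\>\<u|_U\otimes\sigma^x_R$, and similarly for $\mN'$; conversely, every classical channel $q(x|u)$ arises in this way from the diagonal preparation $\rho^u_X=\sum_xq(x|u)|x\>\<x|$. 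Hence the families of cq-state pairs $(\sigma_{UR},\tau_{US})$ produced by Proposition~\ref{prop:morphism}(i) for $\mN,\mN'$ and by Proposition~\ref{thm:morphism-cq}(i) for $\mE,\mE'$ are identical, so the two $\Hmin$ (equivalently, $\pg$) conditions are the same.

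Next I would match the two versions of point~(ii) by proving that a single linear map $\mL:\op{L}(\sH_R)\to\op{L}(\sH_S)$ is a statistical morphism of the cq-channel $\mE$ with $\tau^x_S=\mL(\sigma^x_R)$ if and only if it is a statistical morphism of the CPTP map $\mN$ with $\mN'=\mL\circ\mN$. The factorization $\mN'=\mL\circ\mN$ is equivalent to $\tau^x_S=\mL(\sigma^x_R)$: by linearity $(\mL\circ\mN)(\rho_X)=\sum_x\<x|\rho_X|x\>\,\mL(\sigma^x_R)$, which coincides with $\mN'(\rho_X)$ for all $\rho_X$ exactly when $\mL(\sigma^x_R)=\tau^x_S$ for every $x$. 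For the morphism condition, given a POVM $\{\bar P^y_S\}$ let $\{P^y_R\}$ be the POVM furnished by the morphism property of $\mE$, so that $\Tr[\mL(\sigma^x_R)\,\bar P^y_S]=\Tr[\sigma^x_R\,P^y_R]$ for all $x$; then linearity in $\rho_X$ gives
\[
\Tr[(\mL\circ\mN)(\rho_X)\,\bar P^y_S]=\sum_x\<x|\rho_X|x\>\Tr[\sigma^x_R\,P^y_R]=\Tr[\mN(\rho_X)\,P^y_R]
\]
for \emph{all} $\rho_X\in\op{S}(\sH_X)$, which is precisely the defining relation~(\ref{eq:stat-cond-morph}) for $\mN$; the reverse implication follows by specializing to $\rho_X=|x\>\<x|$. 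Since trace preservation (everywhere) and Hermiticity are properties of the map $\mL$ alone, they transfer verbatim between the two readings.

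With these two dictionaries in place, both implications of Proposition~\ref{thm:morphism-cq} follow immediately from the corresponding implications of Proposition~\ref{prop:morphism} applied to $\mN$ and $\mN'$. I expect the only delicate point to be the equivalence of the two statistical-morphism notions in the direction ``morphism of $\mE$ $\Rightarrow$ morphism of $\mN$'': one must verify that the defining equation, demanded in Definition~\ref{def:morphisms} for \emph{all} input states $\rho_X\in\op{S}(\sH_X)$ including those with non-vanishing off-diagonal entries, is automatically satisfied once it holds on the diagonal preparations $|x\>\<x|$. This is exactly where the dephasing structure of $\mN$---its dependence on $\rho_X$ only through $\<x|\rho_X|x\>$---is essential.
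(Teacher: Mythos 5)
Your proposal is correct and takes essentially the same route as the paper: the paper gives Proposition~\ref{thm:morphism-cq} no separate proof, presenting it precisely as the specialization of Proposition~\ref{prop:morphism} to cq-channels regarded as CPTP maps on commuting input subalgebras, which is exactly the reduction you carry out. Your write-up simply makes explicit what the paper leaves implicit, namely the dephasing realization $\mN(\rho_X)=\sum_x\<x|\rho_X|x\>\,\sigma^x_R$, the identification of the two versions of point~(i), and the verification that the statistical-morphism property of $\mE$ on the diagonal preparations extends by linearity to the morphism property of $\mN$ on all input states.
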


\section{First extension result: the semiclassical and classical cases}\label{sec:classical}

One sufficient condition for a statistical morphisms to be extendable to a CPTP map is that the composite map $\mL\circ\mN$ has commuting output.

\begin{lemma}\label{lemma:semiclassical}
	Let $\mL:\op{L}(\sH_R)\to\op{L}(\sH_S)$ be a statistical morphism of a given channel $\mN\in\op{C}(\sH_Q,\sH_{R})$. If \[\Big[(\mL\circ\mN)(\rho)\ ,\ (\mL\circ\mN)(\sigma)\Big]=0\] for all $\rho,\sigma\in\op{S}(\sH_Q)$, then there exists a CPTP map $\mT:\op{L}(\sH_R)\to\op{L}(\sH_S)$ such that
	\[
	\mT\circ\mN=\mL\circ\mN.
	\]
\end{lemma}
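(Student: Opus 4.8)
The plan is to exploit the commuting-output hypothesis to reduce $\mL\circ\mN$ to an effectively classical channel, and then to realize that channel by an explicit measure-and-prepare map. First I would recall from Remark~\ref{rem:morphisms} that, being a statistical morphism of $\mN$, the map $\mL$ is positive and trace-preserving on $\operatorname{range}(\mN)$; hence $(\mL\circ\mN)(\rho)\in\op{S}(\sH_S)$ for every $\rho\in\op{S}(\sH_Q)$. By assumption these output density operators mutually commute, so the commutative subalgebra of $\op{L}(\sH_S)$ they generate admits a common orthonormal eigenbasis $\{|s\>:1\le s\le d_S\}$ of $\sH_S$; in this basis every output is diagonal, $(\mL\circ\mN)(\rho)=\sum_s p_\rho(s)|s\>\<s|$ with $p_\rho(s)=\<s|(\mL\circ\mN)(\rho)|s\>\ge 0$ and $\sum_s p_\rho(s)=1$.

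Next I would feed the projective POVM $\{|s\>\<s|:1\le s\le d_S\}$ into the defining property of the statistical morphism, Eq.~(\ref{eq:stat-cond-morph}): there exists a POVM $\{P^s_R:1\le s\le d_S\}$ on $\sH_R$ such that
\[
p_\rho(s)=\Tr[(\mL\circ\mN)(\rho)\ |s\>\<s|]=\Tr[\mN(\rho)\ P^s_R]
\]
for all $s$ and all $\rho\in\op{S}(\sH_Q)$. This identity says precisely that the classical output distribution produced by $\mL\circ\mN$ coincides with the statistics of measuring $\{P^s_R\}$ on the output of $\mN$. I would then define the measure-and-prepare map
\[
\mT(X_R)\defeq\sum_s \Tr[X_R\ P^s_R]\,|s\>\<s|_S,\qquad X_R\in\op{L}(\sH_R),
\]
which is manifestly CPTP: it is entanglement-breaking, being the POVM measurement $\{P^s_R\}$ followed by preparation of the orthogonal pure states $|s\>\<s|_S$, and it is trace-preserving because $\sum_s P^s_R=\openone_R$.

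To conclude I would compose $\mT$ with $\mN$ and use the displayed identity to obtain, for every state $\rho$,
\[
(\mT\circ\mN)(\rho)=\sum_s\Tr[\mN(\rho)\ P^s_R]\,|s\>\<s|_S=\sum_s p_\rho(s)\,|s\>\<s|_S=(\mL\circ\mN)(\rho).
\]
Since density operators span $\op{L}(\sH_Q)$ and both $\mT\circ\mN$ and $\mL\circ\mN$ are linear, equality on states extends to all of $\op{L}(\sH_Q)$, giving $\mT\circ\mN=\mL\circ\mN$ as required.

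The main obstacle is conceptual rather than computational: one must recognize that, when the composite output is classical, the diagonal data captured through the projective POVM $\{|s\>\<s|\}$ already determines the entire output operator, so none of the (possibly non-positive) off-diagonal action of $\mL$ is needed to reproduce $\mL\circ\mN$. The only technical point meriting a line of justification is the existence of a single common eigenbasis for the infinite commuting family $\{(\mL\circ\mN)(\rho)\}_\rho$, which follows from finite-dimensionality, since the commutative algebra they generate is finitely generated and a commuting family of Hermitian operators is simultaneously diagonalizable.
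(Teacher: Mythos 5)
Your proposal is correct and follows essentially the same route as the paper's own proof: fix the common eigenbasis guaranteed by the commutation hypothesis, apply the defining property of the statistical morphism to the projective POVM in that basis, and define $\mT$ as the corresponding measure-and-prepare channel, whose composition with $\mN$ reproduces $\mL\circ\mN$ because the outputs are diagonal in that basis. Your additional remarks (positivity of the outputs via Remark~\ref{rem:morphisms}, simultaneous diagonalizability in finite dimension, and the extension from states to all of $\op{L}(\sH_Q)$ by linearity) only make explicit points the paper leaves implicit.
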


\begin{proof}
	Being $\mL$ a statistical morphism of $\mN$, we know, from Definition~\ref{def:morphisms}, that, for any POVM $\{\bar P^x_S:x\in\set{X} \}$ on $\sH_S$, there exists a POVM $\{P^x_R:x\in\set{X} \}$ on $\sH_R$ such that
	\[
	\begin{split}
	\Tr[(\mL\circ\mN)(\rho_Q)\ \bar P^x_S]&=\Tr[\mN(\rho_Q)\ \mL^*(\bar P^x_S)]\\
	&=\Tr[\mN(\rho_Q)\ P^x_R],
	\end{split}
	\]
	for all $\rho_Q\in\op{S}(\sH_Q)$. For $\set{X}=[1,d_S]$, denote by $\{|x\>:x\in\set{X} \}$ the orthonormal basis of $\sH_S$ that simultaneously diagonalize any output of $\mL\circ\mN$. (Such a basis exists, since $[(\mL\circ\mN)(\rho),(\mL\circ\mN)(\sigma)]=0$.) Then choose, in the above equation, $\bar P^x_S=|x\>\<x|_S$, and define $\mT:\op{L}(\sH_R)\to\op{L}(\sH_S)$ to be the linear map given by
	\[
	\mT(Z_R)\defeq \sum_{x\in\set{X}}|x\>\<x|_S\Tr[Z_R\ P^x_R],
	\]
	for any $Z_R\in\op{L}(\sH_R)$. By construction, $\mT$ is CPTP (indeed, it is a measure-and-prepare quantum channel).
	Moreover, for all $\rho_Q\in\op{S}(\sH_Q)$, 
	\[
	\begin{split}
	(\mT\circ\mN)(\rho_Q)&=\sum_{x\in\set{X}}|x\>\<x|_S\Tr[\mN(\rho_Q)\ P^x_R]\\
	&=\sum_{x\in\set{X}}|x\>\<x|_S\Tr[\mN(\rho_Q)\ \mL^*(|x\>\<x|_S)]\\
	&=\sum_{x\in\set{X}}|x\>\<x|_S\Tr[(\mL\circ\mN)(\rho_Q)\ |x\>\<x|_S]\\
	&=(\mL\circ\mN)(\rho_Q),
	\end{split}
	\]
	where the last identity comes from the fact that all $\mN'(\rho_Q)$ are all diagonal on the basis $\{|x\> \}$.
\end{proof}

As an immediate consequence of Lemma~\ref{lemma:semiclassical} above and Proposition~\ref{prop:morphism}, we obtain the following:

\begin{corollary}\label{coro:semiclassical}
	Let $\mN\in\op{C}(\sH_Q,\sH_{R})$ and $\mN'\in\op{C}(\sH_Q,\sH_{S})$ be two CPTP map. Let moreover $\mN'$ be such that $[\mN'(\rho),\mN'(\sigma)]=0$, for all $\rho,\sigma\in\op{S}(\sH_Q)$. Then, the following are equivalent:
	\begin{enumerate}[label=\roman*)]
		\item for any discrete random variable $U$, any probability distribution $q(u)$, and any cq-channel $\mE=\{\rho^u_Q:u\in\set{U} \}$, the corresponding input-output cq-states \[\sigma_{UR}=\sum_{u\in\set{U}}q(u)|u\>\<u|_U\otimes\mN_Q(\rho^u_Q)\qquad \textrm{and} \qquad \tau_{US}=\sum_{u\in\set{U}}q(u)|u\>\<u|_U\otimes\mN'_Q(\rho^u_Q)\] satisfy
		\[
		\Hmin(U|R)_\sigma\le		\Hmin(U|S)_\tau,
		\]
		namely, $\pg(U|R)_\sigma\ge		\pg(U|S)_\tau$;
		\item $\mN$ is degradable into $\mN'$, i.e., there exists a CPTP map $\mT:\op{L}(\sH_R)\to\op{L}(\sH_S)$ such that
		\[
		\mN'=\mT\circ\mN.
		\]
	\end{enumerate}
\end{corollary}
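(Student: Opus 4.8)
The plan is to establish the two implications separately, in each case reducing the claim to one of the two results already proved, namely Proposition~\ref{prop:morphism} and Lemma~\ref{lemma:semiclassical}. The key observation tying everything together is that point~(i) of this corollary is \emph{verbatim} point~(i) of Proposition~\ref{prop:morphism} (the cq-channel $\mE$ and the induced states $\sigma_{UR},\tau_{US}$ are defined identically), so the only genuine work is in converting between the CPTP degrading map demanded here and the Hermitian trace-preserving statistical morphism supplied by Proposition~\ref{prop:morphism}.

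For the direction (ii)$\implies$(i), I would first note that any CPTP map $\mT$ is in particular a positive, Hermitian, trace-preserving linear map, and hence—by the first assertion of Remark~\ref{rem:morphisms}—a well-defined statistical morphism of \emph{every} channel, including $\mN$. Thus, if $\mN'=\mT\circ\mN$ for some CPTP $\mT$, then $\mT$ witnesses point~(ii) of Proposition~\ref{prop:morphism} with $\mL=\mT$. Invoking the implication (ii)$\implies$(i) of that proposition then yields the inequality $\Hmin(U|R)_\sigma\le\Hmin(U|S)_\tau$ directly, which is exactly point~(i) here. No use of the commuting-output hypothesis is needed for this direction.

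For the direction (i)$\implies$(ii), I would apply the implication (i)$\implies$(ii) of Proposition~\ref{prop:morphism} to obtain a Hermitian trace-preserving statistical morphism $\mL:\op{L}(\sH_R)\to\op{L}(\sH_S)$ of $\mN$ satisfying $\mN'=\mL\circ\mN$. At this stage $\mL$ is not yet guaranteed to be CPTP, so the final move is to upgrade it using the extra hypothesis on $\mN'$. Since $\mN'=\mL\circ\mN$, the assumption $[\mN'(\rho),\mN'(\sigma)]=0$ for all $\rho,\sigma\in\op{S}(\sH_Q)$ reads precisely $[(\mL\circ\mN)(\rho),(\mL\circ\mN)(\sigma)]=0$, which is the hypothesis of Lemma~\ref{lemma:semiclassical}. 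That lemma then produces a CPTP map $\mT:\op{L}(\sH_R)\to\op{L}(\sH_S)$ with $\mT\circ\mN=\mL\circ\mN=\mN'$, establishing that $\mN$ is degradable into $\mN'$.

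Because both implications reduce to citing the two prior results, I do not expect a substantial obstacle; the corollary is essentially a packaging statement. The one point I would check carefully is that the commuting-output condition imposed on $\mN'$ transports without loss to the composite $\mL\circ\mN$ fed into Lemma~\ref{lemma:semiclassical}—but this is immediate from the identity $\mN'=\mL\circ\mN$ obtained in the previous step. It is worth remarking that the $\mT$ delivered by the lemma is a measure-and-prepare channel, which is consistent with the semiclassical (commuting-output) nature of $\mN'$.
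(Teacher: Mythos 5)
Your proposal is correct and follows exactly the route the paper intends: the paper states this corollary as ``an immediate consequence of Lemma~\ref{lemma:semiclassical} and Proposition~\ref{prop:morphism},'' and your two implications (CPTP $\Rightarrow$ statistical morphism for the easy direction; Proposition~\ref{prop:morphism} followed by Lemma~\ref{lemma:semiclassical} under the commuting-output hypothesis for the converse) are precisely that packaging, carried out with the details made explicit.
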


The above corollary can be specialized to cq-channels as follows.

\begin{corollary}
Consider two cq-channels $\mE:\set{X}\to\op{S}(\sH_R)$ and $\mE':\set{X}\to\op{S}(\sH_S)$, with $\mE=\{\sigma_R^x:x\in\set
X \}$ and $\mE'=\{\tau_S^x:x\in\set
X \}$. Assume moreover that $[\tau_S^x,\tau_S^{x'}]=0$, for all $x,x'\in\set{X}$. Then, the following are equivalent:
\begin{enumerate}[label=\roman*)]
	\item for any discrete random variable $U$, any probability distribution $q(u)$, and any classical channel $q(x|u)$, the corresponding input-output cq-states \[\sigma_{UR}=\sum_{x\in\set{X}}\sum_{u\in\set{U}}q(u)q(x|u)|u\>\<u|_U\otimes\sigma^x_R\qquad \textrm{and} \qquad \tau_{US}=\sum_{x\in\set{X}}\sum_{u\in\set{U}}q(u)q(x|u)|u\>\<u|_U\otimes\tau^x_S\] satisfy
	\[
	\Hmin(U|R)_{\sigma}\le \Hmin(U|S)_{\tau},
	\]
	namely, $\pg(U|R)_{\sigma}\ge \pg(U|S)_{\tau}$;
	\item $\mE$ is degradable into $\mE'$, i.e., there exists a CPTP map $\mT:\op{L}(\sH_R)\to\op{L}(\sH_S)$ such that
	\[
	\tau^x_S=\mT(\sigma^x_R),\qquad x\in\set{X}.
	\]
\end{enumerate}
\end{corollary}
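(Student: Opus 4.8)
The plan is to deduce this corollary directly from Corollary~\ref{coro:semiclassical} by representing the two cq-channels as measure-and-prepare CPTP maps on a commutative input subalgebra, exactly as anticipated in the discussion preceding Proposition~\ref{thm:morphism-cq}. First I would fix a quantum system $Q$ with $d_Q=|\set{X}|$ and an orthonormal basis $\{|x\>_Q:x\in\set{X}\}$ of $\sH_Q$, and define the two CPTP maps
\[
\mN(\rho_Q)\defeq\sum_{x\in\set{X}}\<x|\rho_Q|x\>_Q\,\sigma^x_R,\qquad \mN'(\rho_Q)\defeq\sum_{x\in\set{X}}\<x|\rho_Q|x\>_Q\,\tau^x_S,
\]
both of which are manifestly completely positive and trace-preserving, since they first dephase the input in the basis $\{|x\>_Q\}$ and then prepare the corresponding output state. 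In particular $\mN(|x\>\<x|_Q)=\sigma^x_R$ and $\mN'(|x\>\<x|_Q)=\tau^x_S$.

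Next I would verify that the hypothesis of Corollary~\ref{coro:semiclassical} is met by $\mN'$. Since every output $\mN'(\rho_Q)$ is a convex combination of the pairwise-commuting Hermitian operators $\{\tau^x_S\}$, all outputs are simultaneously diagonal in a common eigenbasis, so $[\mN'(\rho),\mN'(\sigma)]=0$ for all $\rho,\sigma\in\op{S}(\sH_Q)$. Thus Corollary~\ref{coro:semiclassical} applies verbatim to the pair $(\mN,\mN')$, and it remains only to match up the two conditions.

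The one piece of genuine content is to check that the present conditions coincide with conditions~(i) and~(ii) of Corollary~\ref{coro:semiclassical}. For the min-entropy condition, the key observation is that the dephasing built into $\mN$ and $\mN'$ erases all off-diagonal information: for any cq-channel $\{\rho^u_Q\}$ one has $\mN(\rho^u_Q)=\sum_x q(x|u)\,\sigma^x_R$ with $q(x|u)\defeq\<x|\rho^u_Q|x\>_Q$ a legitimate classical channel, and conversely every classical channel $q(x|u)$ is realised by the diagonal input $\rho^u_Q=\sum_x q(x|u)|x\>\<x|_Q$. Hence the family of cq-states $\sigma_{UR}=\sum_u q(u)|u\>\<u|_U\otimes\mN(\rho^u_Q)$ (and likewise $\tau_{US}$) generated as $\{\rho^u_Q\}$ ranges over all quantum inputs is exactly the family generated as $q(x|u)$ ranges over all classical channels, so the two min-entropy conditions are identical.

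Finally, for the degrading maps: if $\mN'=\mT\circ\mN$ for a CPTP map $\mT$, then evaluating on $|x\>\<x|_Q$ gives $\tau^x_S=\mT(\sigma^x_R)$ for all $x$; conversely, if $\tau^x_S=\mT(\sigma^x_R)$ for all $x$, then linearity together with the explicit form of $\mN$ and $\mN'$ yields $\mN'(\rho_Q)=\mT(\mN(\rho_Q))$ for every $\rho_Q$, i.e.\ $\mN'=\mT\circ\mN$. This makes condition~(ii) here equivalent to condition~(ii) of Corollary~\ref{coro:semiclassical}, closing the argument. I do not expect a real obstacle: the whole proof is a translation, and the only point demanding care is confirming that restricting to classical inputs $q(x|u)$ loses nothing relative to arbitrary quantum inputs $\rho^u_Q$, which is precisely the dephasing remark above.
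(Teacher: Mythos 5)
Your proof is correct and follows essentially the same route as the paper: the paper presents this corollary as a direct specialization of Corollary~\ref{coro:semiclassical} (having already noted that cq-channels can be viewed as CPTP maps with commuting input), and your argument simply spells out that specialization, including the two translation steps the paper leaves implicit. The dephase-and-prepare representation, the commuting-output verification, and the equivalence of quantum inputs with classical channels $q(x|u)$ via $q(x|u)=\<x|\rho^u_Q|x\>$ are all exactly what the paper's reduction requires.
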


\subsection{The classical case}\label{sec:fully-class}

When both the cq-channels have commuting output, we can state the result in purely classical terms as follows:

\begin{corollary}\label{coro:classical}
	Given two classical noisy channels $p(y|x)$ and $p'(z|x)$, the following are equivalent:
	\begin{enumerate}[label=\roman*)]
		\item $p$ is degradable into $p'$;
		\item for any discrete random variable $U$, any probability distribution $q(u)$, and any channel $q(x|u)$, the joint probability distributions $q(u)q(x|u)p(y|x)$ and $q(u)q(x|u)p'(z|x)$ satisfy
		\[
		\Hmin(U|Y)\le \Hmin(U|Z).
		\]
	\end{enumerate}
\end{corollary}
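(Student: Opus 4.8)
The plan is to prove the two implications separately, keeping the argument entirely classical so that it does not invoke the statistical-morphism machinery of Section~\ref{sec:general}. The implication (i)$\Rightarrow$(ii) is the easy direction and is just data processing for the guessing probability: if $p'(z|x)=\sum_y q(z|y)p(y|x)$ for some channel $q(z|y)$, then, for every prior $q(u)$ and pre-processing $q(x|u)$, any decoder that guesses $U$ from $Z$ can be simulated by first sending $Y$ through $q(z|y)$ and then decoding, so $\pg(U|Z)\le\pg(U|Y)$, i.e. $\Hmin(U|Y)\le\Hmin(U|Z)$.

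For the hard direction (ii)$\Rightarrow$(i) I would use convex separation. Collect the channels obtainable from $p$ by post-processing into the polytope
\[
\set{C}\defeq\Big\{\,r(z|x)=\textstyle\sum_y q(z|y)\,p(y|x)\ :\ q(z|y)\text{ a channel}\,\Big\},
\]
which is the image of a simplex under a linear map, hence compact and convex; degradability of $p$ into $p'$ is exactly the statement $p'\in\set{C}$. Assuming the contrary, there is a separating functional, i.e. real coefficients $\{c(x,z)\}$ with $\sum_{x,z}c(x,z)\,p'(z|x)>\max_{r\in\set{C}}\sum_{x,z}c(x,z)\,r(z|x)$.

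The crux is to turn $c$ into an admissible prior and pre-processing. Since every channel (and $p'$) satisfies $\sum_z r(z|x)=1$, replacing $c(x,z)$ by $c(x,z)+a(x)$ shifts both sides of the separation by the same constant $\sum_x a(x)$; this lets me assume $c(x,z)\ge0$, and a global rescaling then lets me assume $\sum_{x,z}c(x,z)=1$. Setting $\set{U}=\set{Z}$ and $P(u,x)\defeq c(x,u)$ thus defines a genuine joint distribution, i.e. a prior $q(u)=\sum_x c(x,u)$ together with a channel $q(x|u)$. Optimizing the post-processing separately for each $y$ identifies the support function with a guessing probability, $\max_{r\in\set{C}}\sum_{x,z}c(x,z)r(z|x)=\sum_y\max_z\sum_x c(x,z)p(y|x)=\pg(U|Y)$, while the (suboptimal) diagonal decoder $u=z$ gives $\pg(U|Z)\ge\sum_{x,z}c(x,z)\,p'(z|x)$. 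The strict separation then yields $\pg(U|Z)>\pg(U|Y)$, contradicting (ii); hence $p'\in\set{C}$.

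I expect the main obstacle to be precisely this middle step: converting a possibly sign-indefinite separating functional into a legitimate $q(u),q(x|u)$ by the shift-and-rescale normalization, and recognizing that the support function of $\set{C}$ in direction $c$ coincides with the maximum-a-posteriori guessing probability $\pg(U|Y)$ produced by the diagonal labelling $\set{U}=\set{Z}$. Once this dictionary between separating hyperplanes and pre-processings is in place, the existence of the hyperplane and the data-processing inequality are both routine; I would also expect the same separation scheme to survive, with the strict inequality replaced by a quantitative gap, in the approximate treatment of Appendix~\ref{sec:app-approximate}.
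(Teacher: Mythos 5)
Your proof is correct, but it takes a genuinely different route from the paper's. The paper argues directly rather than by contraposition: choosing $\set{U}=\set{Z}$ and the diagonal decoder $d'(z'|z)=\delta_{z',z}$ exactly as you do, it deduces from (ii) that $\max_q\min_d\sum_{x,z'}q(x,z')\Delta(x,z')\le0$, where $\Delta(x,z')=p'(z'|x)-\sum_yp(y|x)d(z'|y)$, then invokes a minimax theorem (Lemma~4.13 of Ref.~\cite{liese_statistical_2008}) to swap the two optimizations; since $\sum_{x,z'}\Delta(x,z')=0$ forces $\max_{x,z'}\Delta(x,z')\ge0$, the conclusion $\min_d\max_{x,z'}\Delta(x,z')=0$ exhibits the optimal decoder $d$ itself as an exact degrading channel. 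Your contrapositive argument replaces the minimax theorem by strict separation of the point $p'$ from the compact convex polytope $\set{C}$ of degraded versions of $p$, and the work the paper delegates to the minimax theorem is done by your shift-and-rescale normalization (valid precisely because every element of $\set{C}\cup\{p'\}$ has unit row sums $\sum_z r(z|x)=1$) together with the identification of the support function of $\set{C}$ with the MAP guessing probability $\pg(U|Y)$. The two arguments are dual to one another---the minimax theorem and the separating-hyperplane theorem are both instances of finite-dimensional LP duality---so neither is more elementary in substance, but they deliver different things. The paper's direct form is constructive, in that the degrading channel is produced as the minimax optimizer, and it passes to the approximate case essentially unchanged: Appendix~\ref{sec:app-approximate} extracts the explicit bound $2|\set{X}|\epsilon$ straight from $\min_d\max_{x,z'}\Delta(x,z')\le\epsilon$. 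Your contrapositive form avoids citing a minimax theorem, but making it quantitative requires extra bookkeeping that you assert is routine without carrying it out: the gap obtained from separation gets divided by the normalization $\sum_{x,z}c(x,z)$ introduced when you rescale the functional, so one must bound that sum (after the shift) in terms of the dual of the variational norm, which yields a comparable but alphabet-size-dependent constant.
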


In other words, by replacing $H$ with $\Hmin$ in Definition~\ref{def:less-noisy}, we obtain that the corresponding notion of `less noisy' is equivalent to the notion of `degradable.' On the other hand, we recall the fact that there exist less noisy channels that are not degradable~\cite{Korner1977}. For the reader's convenience,  we report below a self-contained proof of Corollary~\ref{coro:classical}, which does not rely on any previous result about statistical morphisms or quantum channels.

\begin{proof}[Proof of Corollary~\ref{coro:classical}]
Obviously, point~(i) implies point~(ii). Conversely, let us assume point~(ii). This means that, for any joint probability distribution $q(x,u)$,
\[
\max_d\sum_{x,y,u}q(x,u)p(y|x)d(u|y)\ge \max_{d'}\sum_{x,z,u}q(x,u)p'(z|x)d'(u|z),
\]
where $d(u|y)$ and $d'(u|z)$ denote the guessing strategies, namely, discrete noisy channels $d:Y\to \hat U$ and $d':Z\to \hat U$, which the receiver can optimize in order to maximize the probability of correct guessing $\operatorname{Pr}\{U=\hat U \}$.

Choose now $U$ with $\set{U}=\set{Z}$, and label its states by $z'$. Also, fix the strategy $d'(z'|z)=\delta_{z',z}$. Then, for any $q(x,z')$, there exists $d(z'|y)$ such that
\[
\begin{split}
&\sum_{x,z'}q(x,z')\left(\sum_{z}p'(z|x)d'(z|z')-\sum_yp(y|x)d(z'|y)\right)\\
=\,&\sum_{x,z'}q(x,z')\left(p'(z'|x)-\sum_yp(y|x)d(z'|y)\right)\\
\le\,&0.
\end{split}
\]
Equivalently,
\[
\max_q\min_d\left\{\sum_{x,z'}q(x,z')\left(p'(z'|x)-\sum_yp(y|x)d(z'|y)\right)\right\}\le 0.
\]
By the minimax theorem (for our case, see Lemma~4.13 in Ref.~\cite{liese_statistical_2008}) we can exchange the order of the two optimizations, so that:
\[
\min_d\max_q\left\{\sum_{x,z'}q(x,z')\left(p'(z'|x)-\sum_yp(y|x)d(z'|y)\right)\right\}\le 0.
\]
Denoting by $\Delta(x,z')$ the difference $p'(z'|x)-\sum_yp(y|x)d(z'|y)$, we notice that, since $\sum_{x,z'}\Delta(x,z')=0$, we necessarily have that $\max_{x,z'}\Delta(x,z')\ge 0$, otherwise $\sum_{x,z'}\Delta(x,z')<0$. Consequently,
\[
\min_d\max_q\left\{\sum_{x,z'}q(x,z')\left(p'(z'|x)-\sum_yp(y|x)d(z'|y)\right)\right\}=\min_d\max_{x,z'}\Delta(x,z'),
\]
i.e., the maximum is achieved by concentrating the probability distribution $q(x,z')$ on one largest entry. Then, for what we said, we know that
\[
\min_d\max_{x,z'}\Delta(x,z')=0,
\]
implying the existence of a channel $d(z'|y)$ such that
\[
\sum_yp(y|x)d(z'|y)=p'(z'|x),\qquad\forall x,z',
\]
namely, $p$ is degradable into $p'$ as claimed.
\end{proof}

The main advantage of the above proof, with respect to the one used in the general case, is that it can be easily generalized to the approximate case, namely, when there exists $\epsilon\ge 0$ such that, for any random variable $U$ and any joint probability distribution $q(x,u)$,
\[
\pg(U|Y)\ge\pg(U|Z)-\epsilon.
\]
This case is studied in Appendix~\ref{sec:app-approximate}.

\section{Second extension result: the fully quantum case}\label{sec:quantum}

\begin{lemma}\label{lemma:general-channels}
	For a given CPTP map $\mN:\op{L}(\sH_Q)\to\op{L}(\sH_R)$ and a given linear map $\mL:\op{L}(\sH_R)\to\op{L}(\sH_S)$, let $\sH_{S'}\cong\sH_S$, and assume that $(\id_{S'}\otimes\mL)$ is a statistical morphism of $(\id_{S'}\otimes\mN)$. Then there exists a CPTP map $\mT:\op{L}(\sH_R)\to\op{L}(\sH_S)$ such that
	\[
\mT\circ\mN=\mL\circ\mN.
	\]
\end{lemma}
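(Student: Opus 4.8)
The plan is to upgrade the "Hermitian trace-preserving statistical morphism" produced by Proposition~\ref{prop:morphism} into a genuine CPTP map, using the stronger hypothesis that the morphism condition survives tensoring with the identity on a reference system $S'\cong S$. The key idea is a Choi/entanglement-witness argument: complete positivity of a candidate map $\mT$ is equivalent to positivity of its Choi operator, which in turn can be read off by feeding a maximally entangled state through $(\id_{S'}\otimes\mN)$ and probing the output with POVMs on the $S'S$ system.

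**First** I would recall that, by Proposition~\ref{prop:morphism} (or directly by Definition~\ref{def:morphisms}), the hypothesis that $(\id_{S'}\otimes\mL)$ is a statistical morphism of $(\id_{S'}\otimes\mN)$ already gives a Hermitian trace-preserving $\mL$ with $\mN'=\mL\circ\mN$, so the only remaining issue is complete positivity of the degrading map. The construction of $\mT$ should mirror the proof of Lemma~\ref{lemma:semiclassical}: pick an informationally complete POVM $\{\bar P^x_{S'S}\}$ on $\sH_{S'}\otimes\sH_S$, obtain the corresponding POVM $\{P^x_{S'R}\}$ on $\sH_{S'}\otimes\sH_R$ guaranteed by the tensored morphism condition, and use these to define $\mT$ via its Choi operator. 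Concretely, I expect to set
\[
J_\mT \defeq \sum_x \Theta^x_{S'S}\ \Tr_{S'R}\!\left[(\id_{S'}\otimes\mN)(\Phi^+_{S'Q}\text{-type input})\ P^x_{S'R}\right],
\]
or, more cleanly, to define $\mT$ on the range of $\mN$ exactly as the dual of $\mL^*$ restricted appropriately and then verify that the associated Choi operator is positive semidefinite.

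**The crucial step**, and the main obstacle, is extracting positivity of the Choi operator from the morphism condition. The point is that $\mL$ alone is only positive on $\operatorname{range}(\mN)$, which is insufficient for complete positivity; but the tensored statistical-morphism condition says that for \emph{every} POVM element $\bar P_{S'S}\ge 0$ there is a matching $P_{S'R}\ge 0$ reproducing all expectation values against states of the form $(\id_{S'}\otimes\mN)(\rho_{S'Q})$. Applying this with $\rho_{S'Q}$ a maximally entangled (or arbitrary bipartite) input forces $\Tr[J_\mT\ \bar P_{S'S}]\ge 0$ for all $\bar P_{S'S}\ge 0$, which by duality yields $J_\mT\ge 0$, hence $\mT$ completely positive; trace preservation is inherited from $\mL$ via the unit-preserving property of $\mL^*$. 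The delicate part is confirming that the $P^x_{S'R}$ furnished by the morphism genuinely act as the correct "pullback" so that the witnessed positivity pins down $J_\mT$ and not merely its action on product effects—this is precisely where the full (non-binary) statistical-morphism condition, tensored with $S'$, is indispensable, and where an argument restricted to effects would fail (cf. Remark~\ref{rem:morphisms}).

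**Finally** I would check that $\mT\circ\mN=\mL\circ\mN$ by the same informational-completeness argument used in Proposition~\ref{prop:morphism}: since $\{\bar P^x_{S'S}\}$ spans $\op{L}(\sH_{S'}\otimes\sH_S)$, matching all expectation values $\Tr[(\mT\circ\mN)(\rho)\ \bar P^x]=\Tr[(\mL\circ\mN)(\rho)\ \bar P^x]$ forces equality of the outputs, closing the proof.
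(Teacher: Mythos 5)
Your strategy breaks down at exactly the step you flag as crucial, and the gap is not repairable within a Choi-operator framework. The map $\mT$ whose Choi operator you want to certify as positive is never actually defined on all of $\op{L}(\sH_R)$: the hypothesis only constrains $\mL$ on the range of $(\id_{S'}\otimes\mN)$, and a map known only there has no Choi operator. Nor can the Choi operator be ``read off'' from the probing: the Choi operator of a map with input $\op{L}(\sH_R)$ requires feeding half of a maximally entangled state $\Phi^+_{R'R}$ with reference $R'\cong R$, whereas the states accessible under the hypothesis are of the form $(\id_{S'}\otimes\mN)(\rho_{S'Q})$, with reference $S'\cong S$; the state $\Phi^+_{R'R}$ is in general not of this form. (Your displayed formula for $J_\mT$ in fact produces an operator on $\sH_{S'}\otimes\sH_S$ rather than on $\sH_{R'}\otimes\sH_S$, which is a symptom of this mismatch.) What your duality argument can establish at best is that the operators $(\id_{S'}\otimes(\mL\circ\mN))(\rho_{S'Q})$ are positive, i.e., a complete-positivity-type statement about the \emph{composite} $\mL\circ\mN$. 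That is strictly weaker than the lemma's conclusion, which demands a CPTP map $\mT:\op{L}(\sH_R)\to\op{L}(\sH_S)$ defined \emph{everywhere} and agreeing with $\mL$ on $\operatorname{range}(\mN)$; passing from positivity on the range to the existence of a completely positive extension is precisely the nontrivial extension problem highlighted in Remark~\ref{rem:morphisms} (cf.\ the counterexample of Ref.~\cite{matsumoto_example_2014} and Ref.~\cite{reeb-extensions}), and it cannot be solved by testing $J_\mT$ against effects, because $J_\mT$ is simply not determined by the data. Relatedly, proving positivity of the Choi operator of $\mL$ itself is hopeless: the lemma does not claim $\mL$ is completely positive, only that \emph{some} CPTP $\mT$ matches it on the range.

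The paper's proof supplies the missing idea, and it is not a Choi argument but the generalized teleportation protocol of Ref.~\cite{general-teleport}. One applies the tensored morphism condition to the specific Bell-measurement POVM $\{B^x_{S'S}\}$ appearing in the teleportation identity, obtaining a genuine POVM $\{P^x_{S'R}\}$ on $\sH_{S'}\otimes\sH_R$, and then defines
\[
\mT(Z_R)\defeq\sum_{x\in\set{X}}U^x_{S''\to S}\,\Tr_{S'R}\!\Big[\{\Phi^+_{S''S'}\otimes Z_R\}\,\{\openone_{S''}\otimes P^x_{S'R}\}\Big]\,(U^x_{S''\to S})^\dag .
\]
This ``noisy teleportation'' map is CPTP \emph{by construction} (append the state $\Phi^+_{S''S'}$, measure the POVM, apply the conditional correction unitary): complete positivity comes for free from the positivity of the operators $P^x_{S'R}$ and the append--measure--correct structure, so no Choi operator of $\mT$ ever needs to be examined. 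The teleportation identity for $(\mL\circ\mN)(\rho_Q)$, combined with the morphism condition (extended by linearity from product states $\omega_{S'}\otimes\mN(\rho_Q)$ to the entangled state $\Phi^+_{S''S'}$), then gives $\mT\circ\mN=\mL\circ\mN$. This also explains why the hypothesis is tensored with a reference $S'\cong S$ rather than $R'\cong R$: it is exactly what is needed to teleport the \emph{output} state on $S$.
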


\begin{proof}
	Being $(\id_{S'}\otimes\mL)$ a statistical morphism of $(\id_{S'}\otimes\mN)$, we know, from Definition~\ref{def:morphisms}, that, for any POVM $\{\bar P^x_{S'S}:x\in\set{X} \}$ on $\sH_{S'}\otimes\sH_S\cong\sH_S^{\otimes 2}$, there exists a POVM $\{P^x_{S'R}:x\in\set{X} \}$ on $\sH_{S'}\otimes\sH_R$ such that
	\[
	\Tr\Big[\{\omega_{S'}\otimes(\mL\circ\mN)(\rho_Q)\}\ \bar P^x_{S'S}\Big]=\Tr\Big[\{\omega_{S'}\otimes(\mN)(\rho_Q)\}\  P^x_{S'R}\Big],
	\]
	for all $x\in\set{X}$, all $\omega_{S'}\in\op{S}(\sH_{S'})$, and all $\rho_Q\in\op{S}(\sH_Q)$. By linearity, this implies that
	\begin{equation}\label{eq:comp-telep}
\begin{split}
	&\Tr_{S'S}\Big[\{\Phi^+_{S''S'}\otimes(\mL\circ\mN)(\rho_Q)\}\  \{\openone_{S''}\otimes \bar P^x_{S'S}\}\Big]\\=&\Tr_{S'R}\Big[\{\Phi^+_{S''S'}\otimes(\mN)(\rho_Q)\}\  \{\openone_{S''}\otimes P^x_{S'R}\}\Big],
\end{split}
	\end{equation}
	for all $x\in\set{X}$ and all $\rho_Q\in\op{S}(\sH_Q)$,
	where $\Phi^+_{S''S'}=d_S\sum_{i,j=1}^{d_S}|i_{S''}\>|i_{S'}\>\<j_{S''}|\<j_{S'}|$ is the maximally entangled state in $\op{S}(\sH_{S''}\otimes\sH_{S'})\cong\op{S}(\sH_S^{\otimes 2})$.
	
	The protocol of generalized teleportation~\cite{general-teleport} implies the existence of a POVM $\{B^x_{S''S'}:x\in\set{X}\}$ and unitary operators $\{U^x_{S''\to S}:x\in\set{X} \}$ such that
	\[
	(\mL\circ\mN)(\rho_Q)=\sum_{x\in\set{X}}U^x_{S''\to S}
	\Tr_{S'S}\Big[\{\Phi^+_{S''S'}\otimes(\mL\circ\mN)(\rho_Q)\}\  \{\openone_{S''}\otimes B^x_{S'S}\}\Big](U^x_{S''\to S})^\dag,
	\]
	for all $\rho_Q\in\op{S}(\sH_Q)$. Then, Eq.~(\ref{eq:comp-telep}) implies the existence of a POVM $\{P^x_{S'R}:x\in\set{X} \}$ on $\sH_{S'}\otimes\sH_R$ such that
	\[
		(\mL\circ\mN)(\rho_Q)=\sum_{x\in\set{X}}U^x_{S''\to S}
		\Tr_{S'R}\Big[\{\Phi^+_{S''S'}\otimes(\mN)(\rho_Q)\}\  \{\openone_{S''}\otimes P^x_{S'R}\}\Big](U^x_{S''\to S})^\dag,
	\]
	for all $\rho_Q\in\op{S}(\sH_Q)$. 
	The statement is proved by defining the map $\mT:\op{L}(\sH_R)\to\op{L}(\sH_S)$ as
	\[
	\mT(Z_R)\defeq \sum_{x\in\set{X}}U^x_{S''\to S}
	\Tr_{S'R}\Big[\{\Phi^+_{S''S'}\otimes Z_R\}\  \{\openone_{S''}\otimes P^x_{S'R}\}\Big](U^x_{S''\to S})^\dag,
	\]
	and noticing that, being a sort of `noisy teleportation,' $\mT$  is indeed a CPTP map, as claimed.
\end{proof}

In fact, following an argument in Ref.~\cite{buscemi2012comparison}, it is not difficult to show that the assumption in Lemma~\ref{lemma:general-channels} can be somewhat weakened as follows: instead of assuming that $(\id_{S'}\otimes\mL)$ is a statistical morphism of $(\id_{S'}\otimes\mN)$, one can assume that $(\id_{S'}\otimes\mL)$ is a statistical morphism of $(\mD_{S'}\otimes\mN)$, where $\mD:\op{L}(\sH_{S'})\to\op{L}(\sH_{S'})$ is some invertible CPTP map, in the sense that $\mD(\op{L}(\sH_{S'}))=\op{L}(\sH_{S'})$. (For example, a channel $\mD(\rho)=p\rho+(1-p)\openone/d$ is invertible as long as $p>0$.)

As an immediate consequence of Lemma~\ref{lemma:general-channels} above and Proposition~\ref{prop:morphism}, we obtain the following:

\begin{corollary}
	Let $\mN:\op{L}(\sH_Q)\to\op{L}(\sH_R)$ and $\mN':\op{L}(\sH_Q)\to\op{L}(\sH_S)$ be two CPTP maps. Let $\sH_{S'}$ be an auxiliary Hilbert space such that $\sH_{S'}\cong\sH_S$. The following are equivalent:
	\begin{enumerate}[label=\roman*)]
		\item for any discrete random variable $U$, any probability distribution $q(u)$, and any cq-channel $\mE=\{\rho^u_{S'Q}:u\in\set{U} \}$, the corresponding input-output cq-states \[\sigma_{US'R}=\sum_{u\in\set{U}}q(u)|u\>\<u|_U\otimes(\id_{S'}\otimes\mN_Q)(\rho^u_{S'Q})\] and \[\tau_{US'S}=\sum_{u\in\set{U}}q(u)|u\>\<u|_U\otimes(\id_{S'}\otimes\mN'_Q)(\rho^u_{S'Q})\] satisfy
		\[
		\Hmin(U|S'R)_\sigma\le		\Hmin(U|S'S)_\tau,
		\]
		namely, $\pg(U|S'R)_\sigma\ge		\pg(U|S'S)_\tau$;
		\item $\mN$ is degradable into $\mN'$, i.e., there exists a CPTP map $\mT:\op{L}(\sH_R)\to\op{L}(\sH_S)$ such that
		\[
		\mN'=\mT\circ\mN.
		\]
	\end{enumerate}
\end{corollary}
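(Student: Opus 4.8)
The plan is to obtain this corollary as the ancilla-assisted (``stabilized'') analogue of Corollary~\ref{coro:semiclassical}, by composing Proposition~\ref{prop:morphism} with Lemma~\ref{lemma:general-channels} in place of Lemma~\ref{lemma:semiclassical}. The implication $(ii)\Rightarrow(i)$ is pure data processing; the substance is in $(i)\Rightarrow(ii)$, where the task is to manufacture a statistical morphism of the \emph{product form} $\id_{S'}\otimes\mL$ required by the hypothesis of Lemma~\ref{lemma:general-channels}.

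For $(ii)\Rightarrow(i)$ I would write $\mN'=\mT\circ\mN$, so that $\id_{S'}\otimes\mN'=(\id_{S'}\otimes\mT)\circ(\id_{S'}\otimes\mN)$ and hence $\tau_{US'S}=(\id_U\otimes\id_{S'}\otimes\mT)(\sigma_{US'R})$. Thus $\tau$ arises from $\sigma$ by processing the side system $R\to S$ through the channel $\mT$, and since the conditional min-entropy cannot decrease (equivalently, the guessing probability cannot increase) under a channel applied to the conditioning system, one gets $\Hmin(U|S'R)_\sigma\le\Hmin(U|S'S)_\tau$, i.e. $\pg(U|S'R)_\sigma\ge\pg(U|S'S)_\tau$.

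For $(i)\Rightarrow(ii)$ the key remark is that $(i)$ is, verbatim, condition~(i) of Proposition~\ref{prop:morphism} applied to the tensored channels $\tilde{\mN}\defeq\id_{S'}\otimes\mN$ and $\tilde{\mN}'\defeq\id_{S'}\otimes\mN'$, whose common input algebra is $\op{L}(\sH_{S'}\otimes\sH_Q)$. First I would specialize $(i)$ to product inputs $\rho^u_{S'Q}=\omega_{S'}\otimes\rho^u_Q$ with a fixed state $\omega_{S'}$; since an uncorrelated $S'$ is useless to the guesser, this collapses to condition~(i) of Proposition~\ref{prop:morphism} for the bare pair $\mN,\mN'$ and yields a Hermitian trace-preserving statistical morphism $\mL\colon\op{L}(\sH_R)\to\op{L}(\sH_S)$ of $\mN$ with $\mN'=\mL\circ\mN$. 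Next, applying the converse direction of Proposition~\ref{prop:morphism} to the full condition~(i) for $\tilde{\mN},\tilde{\mN}'$, I obtain, for every POVM $\{\bar P^x_{S'S}\}$, a POVM $\{P^x_{S'R}\}$ with $\Tr[\tilde{\mN}'(\rho_{S'Q})\,\bar P^x_{S'S}]=\Tr[\tilde{\mN}(\rho_{S'Q})\,P^x_{S'R}]$ for all $\rho_{S'Q}$. Because $\mN'=\mL\circ\mN$ forces $\tilde{\mN}'=(\id_{S'}\otimes\mL)\circ\tilde{\mN}$, this correspondence is precisely the defining property that $\id_{S'}\otimes\mL$ is a statistical morphism of $\id_{S'}\otimes\mN$. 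With this, the hypothesis of Lemma~\ref{lemma:general-channels} is met, and the lemma delivers a CPTP map $\mT$ with $\mT\circ\mN=\mL\circ\mN=\mN'$, which is $(ii)$.

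The step I expect to be the crux is the matching of outputs between the two invocations of Proposition~\ref{prop:morphism}: the proposition, applied to the tensored pair, a priori only furnishes some Hermitian trace-preserving morphism $\mathcal{M}$ of $\id_{S'}\otimes\mN$, whereas Lemma~\ref{lemma:general-channels} needs a morphism of the rigid product form $\id_{S'}\otimes\mL$. The identity $\tilde{\mN}'=(\id_{S'}\otimes\mL)\circ\tilde{\mN}$, inherited from the bare factorization $\mN'=\mL\circ\mN$, is exactly what lets me read the tensored POVM correspondence as a statement about $\id_{S'}\otimes\mL$ rather than about an abstract $\mathcal{M}$; equivalently, $\mathcal{M}$ and $\id_{S'}\otimes\mL$ agree on $\operatorname{range}(\id_{S'}\otimes\mN)=\op{L}(\sH_{S'})\otimes\operatorname{range}(\mN)$, which is all that the morphism property ever tests. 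The conceptual point is that the ancilla $S'$ in hypothesis~$(i)$ is precisely what upgrades ``statistical morphism of $\mN$'' to ``statistical morphism of $\id_{S'}\otimes\mN$''; everything else (the data-processing inequality for $\Hmin$, the spanning of the range) is routine.
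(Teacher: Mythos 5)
Your proof is correct and follows essentially the same route as the paper, which presents this corollary as an ``immediate consequence'' of Proposition~\ref{prop:morphism} and Lemma~\ref{lemma:general-channels} without spelling out the combination. Your two-step construction---first extracting $\mL$ from condition~(i) restricted to product inputs $\omega_{S'}\otimes\rho^u_Q$, then using the full tensored hypothesis together with $\id_{S'}\otimes\mN'=(\id_{S'}\otimes\mL)\circ(\id_{S'}\otimes\mN)$ to certify that $\id_{S'}\otimes\mL$ is a statistical morphism of $\id_{S'}\otimes\mN$---is exactly the filling-in needed to bridge the a priori mismatch between an arbitrary morphism of the tensored channel and the product-form morphism required by Lemma~\ref{lemma:general-channels}.
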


In the case of cq-channels, we have the following:

\begin{corollary}
	Consider two cq-channels $\mE:\set{X}\to\op{S}(\sH_R)$ and $\mE':\set{X}\to\op{S}(\sH_S)$, with $\mE=\{\sigma_R^x:x\in\set
	X \}$ and $\mE'=\{\tau_S^x:x\in\set
	X \}$. Introduce an auxiliary Hilbert space $\sH_{S'}\cong\sH_S$ and let $\mE'':\set{Y}\to\op{S}(\sH_{S'})$ be a cq-channel, with $\mE''=\{\omega^y_{S'}:y\in\set{Y} \}$, such that $\operatorname{span}\{\omega^y_{S'}:y\in\set{Y}\}=\op{L}(\sH_{S'})$. Then, the following are equivalent:
	\begin{enumerate}[label=\roman*)]
		\item for any discrete random variable $U$, any probability distribution $q(u)$, and any classical channel $q(y,x|u)$, the corresponding input-output cq-states \[\sigma_{US'R}=\sum_{y\in\set{Y}}\sum_{x\in\set{X}}\sum_{u\in\set{U}}q(u)q(y,x|u)|u\>\<u|_U\otimes\omega^y_{S'}\otimes\sigma^x_R\] and \[\tau_{US'S}=\sum_{y\in\set{Y}}\sum_{x\in\set{X}}\sum_{u\in\set{U}}q(u)q(x|u)|u\>\<u|_U\otimes\omega^y_{S'}\otimes\tau^x_S\] satisfy
		\[
		\Hmin(U|S'R)_{\sigma}\le \Hmin(U|S'S)_{\tau},
		\]
		namely, $\pg(U|S'R)_{\sigma}\ge \pg(U|S'S)_{\tau}$;
		\item $\mE$ is degradable into $\mE'$, i.e., there exists a CPTP map $\mT:\op{L}(\sH_R)\to\op{L}(\sH_S)$ such that
		\[
		\tau^x_S=\mT(\sigma^x_R),\qquad x\in\set{X}.
		\]
	\end{enumerate}
\end{corollary}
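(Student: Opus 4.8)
The plan is to reduce the statement to the two tools already available for the fully quantum case---the guessing-probability characterization of statistical morphisms underlying Proposition~\ref{prop:morphism} and the ``noisy teleportation'' construction of Lemma~\ref{lemma:general-channels}---by exploiting the fact that a cq-channel is just a CPTP map acting on the commutative subalgebra of diagonal operators. Concretely, I introduce an input register $\sH_X$ with orthonormal basis $\{|x\>_X:x\in\set{X}\}$ and the measure-and-prepare channels $\mN(|x\>\<x|_X)=\sigma^x_R$ and $\mN'(|x\>\<x|_X)=\tau^x_S$. Since $\mN$ discards the off-diagonal part of its input, a CPTP map $\mT$ satisfies $\mT\circ\mN=\mN'$ if and only if $\mT(\sigma^x_R)=\tau^x_S$ for all $x$, which is exactly point~(ii). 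The implication (ii)$\Rightarrow$(i) is then routine: given such a $\mT$, the channel $\id_{S'}\otimes\mT$ sends $\omega^y_{S'}\otimes\sigma^x_R$ to $\omega^y_{S'}\otimes\tau^x_S$ and hence maps $\sigma_{US'R}$ to $\tau_{US'S}$ while acting trivially on $U$, so that $\Hmin(U|S'R)_\sigma\le\Hmin(U|S'S)_\tau$ follows from the data-processing inequality for the conditional min-entropy.

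For the converse (i)$\Rightarrow$(ii) I would argue in three steps. First, specializing~(i) to joint distributions of the product form $q(y,x|u)=r(y)\,q(x|u)$, with $r$ fixed, decorrelates the reference $S'$ from $U$, so that $\pg(U|S'R)_\sigma=\pg(U|R)$ and likewise for $\tau$; this recovers the non-extended hypothesis of Proposition~\ref{thm:morphism-cq}, which therefore furnishes a Hermitian trace-preserving statistical morphism $\mL:\op{L}(\sH_R)\to\op{L}(\sH_S)$ of $\mE$ with $\mL(\sigma^x_R)=\tau^x_S$. Second---and this is the crucial point---I use the \emph{full} strength of~(i) to promote $\mL$ to a \emph{tensorised} morphism. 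Reading~(i) as a guessing game in which the hidden label $U$ is correlated through $q(y,x|u)$ with the parameter $(y,x)$ of the two experiments $\{\omega^y_{S'}\otimes\sigma^x_R\}$ and $\{\omega^y_{S'}\otimes\tau^x_S\}$, the Blackwell--Sherman--Stein characterization already invoked in the proof of Proposition~\ref{prop:morphism}~\cite{buscemi2012comparison,buscemi_game-theoretic_2014,buscemi_equivalence_2014} yields, for every POVM $\{\bar P^w_{S'S}\}$, a POVM $\{P^w_{S'R}\}$ with
\[
\Tr[(\omega^y_{S'}\otimes\tau^x_S)\,\bar P^w_{S'S}]=\Tr[(\omega^y_{S'}\otimes\sigma^x_R)\,P^w_{S'R}],\qquad\forall\,y,x,w.
\]
Because $\operatorname{span}\{\omega^y_{S'}:y\in\set{Y}\}=\op{L}(\sH_{S'})$, the index $y$ may be replaced by an arbitrary $W_{S'}\in\op{L}(\sH_{S'})$, and since $\mN$ pinches $X$ the outputs of $\id_{S'}\otimes\mN$ are spanned by the operators $W_{S'}\otimes\sigma^x_R$, on which $\id_{S'}\otimes\mL$ acts as $W_{S'}\otimes\tau^x_S$; the displayed identity is thus exactly the statement that $(\id_{S'}\otimes\mL)$ is a statistical morphism of $(\id_{S'}\otimes\mN)$.

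Third, with the hypothesis of Lemma~\ref{lemma:general-channels} now verified, that lemma produces a CPTP map $\mT:\op{L}(\sH_R)\to\op{L}(\sH_S)$ with $\mT\circ\mN=\mL\circ\mN$, i.e.\ $\mT(\sigma^x_R)=\mL(\sigma^x_R)=\tau^x_S$, which is point~(ii). This closes the equivalence.

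I expect the second step to be the main obstacle, and indeed it is the very reason the auxiliary informationally complete reference $\mE''$ appears in the hypothesis: a statistical morphism of $\mN$ need not remain a morphism after tensoring with $\id_{S'}$---this is precisely what separates the genuinely quantum case from the semiclassical one of Lemma~\ref{lemma:semiclassical}---so the spanning family $\{\omega^y_{S'}\}$ is there to force enough correlations through the comparison in~(i) that the \emph{tensorised} morphism property, rather than merely the bare one, is certified. The remaining verifications, namely that the pinching structure of $\mN$ lets one restrict to classical inputs on $X$ without loss and that appending an uncorrelated reference leaves the guessing probability unchanged, are routine.
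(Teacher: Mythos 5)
Your proposal is correct and follows essentially the same route as the paper, which presents this corollary as an immediate consequence of Proposition~\ref{prop:morphism} (in its cq-form, Proposition~\ref{thm:morphism-cq}) combined with Lemma~\ref{lemma:general-channels}; your three steps simply make explicit what the paper leaves implicit, namely that the spanning condition $\operatorname{span}\{\omega^y_{S'}\}=\op{L}(\sH_{S'})$ is exactly what upgrades the POVM-simulation identity on the states $\omega^y_{S'}\otimes\sigma^x_R$ to the statement that $(\id_{S'}\otimes\mL)$ is a statistical morphism of $(\id_{S'}\otimes\mN)$, so that the noisy-teleportation lemma applies. Your reading of the identity $q(x|u)$ in the displayed expression for $\tau_{US'S}$ as the typo it is (it should be $q(y,x|u)$) is also the intended one.
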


\section{Conclusions}\label{sec:concls}

In this work we described a connection between the theory of statistical comparison and the comparison of noisy channels, independent from that of Ref.~\cite{Raginsky2011}. In particular, we showed how Definitions~\ref{def:degradable} and~\ref{def:less-noisy} become completely equivalent if $H$ is replaced by $\Hmin$ in Definition~\ref{def:less-noisy}.

The result proved here can be seen as a \textit{converse} to the data-processing inequality: the monotonic decrease of information (as measured here by $\Hmin$ or, equivalently, by $\pg$) is not only necessary but also \textit{sufficient} for the existence of a post-processing map (a trace preserving statistical morphism, in general, but we saw how additional assumptions can lead to the existence of a CPTP post-processing).

As we already mentioned somewhere else~\cite{buscemi2012all,buscemi_complete_2014,buscemi_equivalence_2014,buscemi_forthcoming_1}, we believe that this approach, based on the theory of statistical comparison, can play an important role in understanding the peculiarity of memoryless processes as the information-theoretic counterpart of adiabatic processes in thermodynamics.

\section*{Acknowledgments}

The author  acknowledges financial support from the JSPS KAKENHI, No. 26247016.

\appendix

\section{Classical case: approximate version}\label{sec:app-approximate}

Assuming
\[
\pg(U|Y)\ge\pg(U|Z)-\epsilon,\qquad\epsilon\ge0,
\]
the proof of Corollary~\ref{coro:classical} carries through unaltered, until one shows that
\begin{equation}\label{eq:class-opt}
\min_d\max_{x,z'}\Delta(x,z')\le\epsilon.
\end{equation}
To proceed from here, consider now the following quantity:
\begin{equation}\label{eq:var-dist}
\max_x\sum_{z'}|\Delta(x,z')|.
\end{equation}
The above quantity is the induced $l_1$-norm distance~\footnote{
	It holds that (see, e.g., Example 5.6.4 in Ref.~\cite{horn-johnson}) \[\max_x\sum_{z'}|\Delta(x,z')|=\max_{x}\sum_{z'}\left|p'(z'|x)-\sum_y
	p(y|x)d(z'|y)\right|\defeq \MN{p'-dp}_1,\] where $\MN{A}_1\defeq\max_{v:\N{v}_1=1}\N{Av}_1$ is the \textit{variational norm}. The quantity in Eq.~(\ref{eq:var-dist}) measures how well one can statistically distinguish $p'(z|x)$ from $\sum_yp(y|x)d(y|z)$.
}
between the channel $p'(z'|x)$ and the degraded channel $\sum_y
p(y|x)d(z'|y)$.
Since, for all $x$, $\sum_{z'}\Delta(x,z')=0$, we have that
\[
\sum_{z'}|\Delta(x,z')|=2\sum_{z':\Delta(x,z')\ge0}\Delta(x,z'),\qquad\forall x\in\set{X},
\]
which implies that, for the strategy $d$ achieving the left-hand side of Eq.~(\ref{eq:class-opt}),
\[
\begin{split}
\sum_{z'}|\Delta(x,z')|\le&\,2|\set{X}|\max_{x,z'}\Delta(x,z')\\
\le&\,2|\set{X}|\epsilon,\qquad\forall x\in\set{X}.
\end{split}
\]
In particular,
\[
\max_{x}\sum_{z'}\left|p'(z'|x)-\sum_y
p(y|x)d(z'|y)\right|\le2|\set{X}|\epsilon.
\]
We summarize this finding in a separate corollary:

\begin{corollary}
	Given two classical noisy channels, $p(y|x)$ and $p'(z|x)$, and $\epsilon\ge0$, suppose that, for any discrete random variable $U$, any probability distribution $q(u)$, and any channel $q(x|u)$, the joint probability distributions $q(u)q(x|u)p(y|x)$ and $q(u)q(x|u)p'(z|x)$ satisfy
		\[
		2^{-\Hmin(U|Y)}\ge 2^{-\Hmin(U|Z)}-\epsilon.
		\]		
		Then, there exists a degrading channel $d(z|y)$ such that
		\[
		\max_x\sum_z\left|p'(z|x)-\sum_yp(y|x)d(z|y)\right|\le2|\set{X}|\epsilon.
		\]
\end{corollary}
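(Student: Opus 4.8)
The plan is to re-run the proof of Corollary~\ref{coro:classical} almost verbatim, carrying the $\epsilon$-slack through every step, and then to convert the resulting pointwise control on the discrepancy into a bound on the induced $l_1$-norm. First I would take $U$ with range $\set{U}=\set{Z}$, label its values by $z'$, and fix the $Z$-side guessing strategy to be the identity $d'(z'|z)=\delta_{z',z}$. Writing $\Delta(x,z')\defeq p'(z'|x)-\sum_yp(y|x)d(z'|y)$, the hypothesis $\pg(U|Y)\ge\pg(U|Z)-\epsilon$ then reads, after optimizing over the $Y$-side strategy $d$ and the prior $q(x,z')$,
\[
\max_q\min_d\Big\{\sum_{x,z'}q(x,z')\,\Delta(x,z')\Big\}\le\epsilon ,
\]
which is exactly the inequality from the exact case but with right-hand side $\epsilon$ instead of $0$.

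Next I would invoke the minimax theorem (Lemma~4.13 of Ref.~\cite{liese_statistical_2008}) to swap the two optimizations, and then exploit that for each fixed $x$ one has $\sum_{z'}\Delta(x,z')=0$, since both $p'(\cdot|x)$ and the degraded channel $\sum_yp(y|x)d(\cdot|y)$ are normalized over $z'$. This normalization forces $\max_{x,z'}\Delta(x,z')\ge0$ and lets the inner maximization over $q$ concentrate on a single largest coordinate, yielding the key inequality~(\ref{eq:class-opt}),
\[
\min_d\max_{x,z'}\Delta(x,z')\le\epsilon .
\]
Up to this point nothing beyond the exact proof is needed.

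The genuinely new step is to upgrade this one-sided, entrywise bound into a two-sided $l_1$ bound. Fixing a strategy $d$ that attains the minimum above, I would again use $\sum_{z'}\Delta(x,z')=0$ to split the absolute-value mass evenly between positive and negative parts, $\sum_{z'}|\Delta(x,z')|=2\sum_{z':\Delta\ge0}\Delta(x,z')$, then bound each positive term by $\max_{x,z'}\Delta(x,z')\le\epsilon$ and the number of such terms by $|\set{X}|$. This gives $\max_x\sum_{z'}|\Delta(x,z')|\le 2|\set{X}|\epsilon$, i.e.\ $\MN{p'-dp}_1\le 2|\set{X}|\epsilon$, which is the asserted bound.

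The main obstacle is precisely this last conversion: the pointwise hypothesis controls $\Delta(x,z')$ only from above, whereas the $l_1$-distance is also sensitive to the negative excursions of $\Delta$. The zero-sum identity $\sum_{z'}\Delta(x,z')=0$ is what turns one-sided control into two-sided control, at the price of the cardinality factor. Everything else---the choice of $U$, the identity strategy on the $Z$-side, and the minimax exchange---is inherited unchanged from the proof of Corollary~\ref{coro:classical}.
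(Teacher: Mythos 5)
Your proposal is correct and is essentially identical to the paper's own proof in the appendix: you carry the exact-case argument through with the $\epsilon$-slack to obtain $\min_d\max_{x,z'}\Delta(x,z')\le\epsilon$, and then use the zero-sum identity $\sum_{z'}\Delta(x,z')=0$, exactly as the paper does, to turn this one-sided entrywise bound into the two-sided $l_1$ bound via $\sum_{z'}|\Delta(x,z')|=2\sum_{z':\Delta(x,z')\ge0}\Delta(x,z')$. One remark applying equally to your write-up and to the paper's: the sum $\sum_{z':\Delta(x,z')\ge0}\Delta(x,z')$ has at most $|\set{Z}|$ terms (since $z'$ ranges over $\set{U}=\set{Z}$), so the counting step actually yields $2|\set{Z}|\epsilon$ rather than $2|\set{X}|\epsilon$; the appearance of $|\set{X}|$ in the final bound seems to be a typo inherited from the statement itself, which your blind reproduction faithfully carries along.
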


\end{document}